\newtheorem{theorem}{Theorem}
\newtheorem{lemma}{Lemma}
\newtheorem{remark}{Remark}
\newtheorem{problem}[theorem]{Problem} 
\newtheorem{proposition}{Proposition}  
\journal{Journal of Fluids and Structures}
\begin{document}

\begin{frontmatter}

\title{An energy stable one-field monolithic arbitrary Lagrangian-Eulerian formulation for fluid-structure interaction} 


\author[add0]{Yongxing Wang\corref{mycorrespondingauthor}}
\cortext[mycorrespondingauthor]{Corresponding author}
\ead{scsywan@leeds.ac.uk/yongxingwang6@gmail.com}

\author[add1]{Peter K. Jimack}
\author[add1]{Mark A. Walkley}
\author[add2]{Olivier Pironneau}

\address[add0]{School of  Mechanical Engineering, University of Leeds, Leeds, UK, LS2 9JT}
\address[add1]{School of Computing, University of Leeds, Leeds, UK, LS2 9JT}
\address[add2]{Laboratoire Jacques-Louis Lions, Sorbonne Universit$\acute{e}$s, Paris, France}

\begin{abstract}
In this article we present a one-field monolithic finite element method in the Arbitrary Lagrangian-Eulerian (ALE) formulation for Fluid-Structure Interaction (FSI) problems. The method only solves for one velocity field in the whole FSI domain, and it solves in a monolithic manner so that the fluid solid interface conditions are satisfied automatically. We prove that the proposed scheme is unconditionally stable, through energy analysis, by utilising a conservative formulation and an exact quadrature rule. We implement the algorithm using both ${\bf F}$-scheme and ${\bf d}$-scheme, and demonstrate that the former has the same formulation in two and three dimensions. Finally several numerical examples are presented to validate this methodology, including combination with remesh techniques to handle the case of very large solid displacement.
\end{abstract}

\begin{keyword}
fluid structure interaction \sep finite element \sep one field \sep monolithic scheme \sep arbitrary Lagrangian-Eulerian \sep energy stable
\end{keyword}

\end{frontmatter}


\section{Introduction}
\label{sec_introduction}

Numerical methods for Fluid-Structure Interaction (FSI) have been widely studied during the past decades, and a variety of methodologies have been developed in order to address different aspects of the FSI problem. However stability analyses of the existing numerical methods are rare especially when large solid deformation is involved. This paper is dedicated to establishing a robust stability analysis of a one-field monolithic FSI scheme in the Arbitrary Lagrangian-Eulerian (ALE) framework.

Monolithic methods have been regarded as the most robust FSI algorithms in the literature \cite{Heil_2004,Heil_2008,Muddle_2012,Hecht_2017,Wang_2017,Wang_2019,hubner2004monolithic}, which solve for the fluid and solid variables simultaneously in one equation system. Among these methodologies for FSI problems, the one-field approaches \cite{Hecht_2017,Wang_2017} express the solid equation in terms of velocity, thus only solve for one velocity in the whole FSI domain. In this case the whole system can be solved similarly to a modified fluid problem, and the coupling conditions at fluid and solid interface are automatically satisfied.

The stability analysis when using the ALE framework is challenging, even for the pure fluid problem, due to the arbitrary moving frame \cite{nobile1999stability,formaggia2004stability,bonito2013time}. \cite{Boffi_2016,Boffi_2015} present an energy stable Fictitious Domain Method with Distributed Lagrangian Multiplier (FDM/DLM), and \cite{Hecht_2017,Pironneau_2016} present an energy stable Eulerian formulation by remeshing. In a previous study \cite{Wang_2019} we analysed the energy stability for a one-field FDM method. In this article we extend this one-field idea to the ALE formulation, and the stability result is achieved by expressing the fluid and solid equations in a conservative formulation. In this sense, the formulation is similar to the one introduced in \cite{Hecht_2017}. However it differs from \cite{Hecht_2017} in the following perspectives: (1) we formulate the solid in the reference domain and analyse in an ALE frame of reference, in which case the formulation and analysis are exactly the same for two and three dimensional cases, whereas \cite{Hecht_2017} formulates and analyses everything in the current domain, for which the three dimensional case is significantly more complicated \cite{chiang2017numerical}; (2) we update the solid deformation tensor (the ${\bf F}$-scheme) while \cite{Hecht_2017} updates the solid displacement (the ${\bf d}$-scheme); (3) we implement the scheme by solving an additional solid-like equation at each time step in order to move the mesh, whilst \cite{Hecht_2017} implements their scheme by remeshing which is expensive in the three dimensional case.

The paper is organized as follows. In Section \ref{sec_ale} the control equations for the FSI problem are introduced in an ALE framework. In Section \ref{weak_formulation} the finite element weak formulation is introduced, followed by spatial and time discretisations in Section \ref{section_discretization}. The main results of energy stability are presented in Section \ref{stability_energy}. Implementation details are considered in Section \ref{implementation} and numerical examples are given in Section \ref{sec_numerical_exs}, with some conclusions in Section \ref{sec_conclusion}.

\section{The arbitrary Lagrangian-Eulerian description for the FSI problem}
\label{sec_ale}
Let $\Omega_t^f\subset\mathbb{R}^d$ and $\Omega_t^s\subset\mathbb{R}^d$ be the fluid and solid domain respectively (which are time dependent regions), $\Gamma_t=\overline{\Omega}_t^f \cap \overline{\Omega}_t^s$ is the moving interface between the fluid and solid, and $\Omega_t=\overline{\Omega}_t^f \cup \overline{\Omega}_t^s$ has an outer boundary $\partial\Omega_t$, which can be fixed or moving as shown in Figure \ref{aleformulation}. The Eulerian description is convenient when we observe a fluid from a fixed frame, while the Lagrangian description is convenient when we observe a solid from a frame moving with it. An ALE frame of reference can be adopted when a fluid and solid share an interface and interact with each other as shown in Figure \ref{aleformulation}, in which case the frame moves arbitrarily from a reference configuration $\Omega_{t_0}$, chosen to be the same as the initial configuration at $t_0$, to a current configuration $\Omega_t$. Let us define a family of mappings $\mathcal{A}_t$:
\begin{equation}
\mathcal{A}_t: \Omega_{t_0} \subset \mathbb{R}^d\rightarrow \Omega_t \subset \mathbb{R}^d,
\end{equation}
with $d=2,3$ being the dimensions. We assume that $\mathcal{A}_t \in C^0\left(\overline{\Omega}_{t_0}\right)^d$ is one-to-one and invertible with continuous inverse $\mathcal{A}^{-1}_t \in C^0\left(\overline{\Omega}_t\right)^d$. Hence a point $\hat{\bf x} \in \Omega_{t_0}$ has a unique image ${\bf x} \in \Omega_t$ at time $t$, i.e. 
\begin{equation}\label{ale_mapping}
{\bf x}={\mathcal{A}}\left(\hat{{\bf x}},t\right)
=\mathcal{A}_t\left({\hat{\bf x}}\right),
\end{equation}
and a point ${\bf x} \in \Omega_t$ at time $t$ has a unique inverse image $\hat{\bf x} \in \Omega_{t_0}$
\begin{equation}
\hat{\bf x}=\hat{\mathcal{A}}\left({\bf x},t\right)
=\mathcal{A}^{-1}_t\left({\bf x}\right).
\end{equation}
We call ${\bf x}\in \Omega_t$ the Eulerian coordinate, and call its inverse image ${\hat{\bf x}}$, via the above mapping $\mathcal{A}^{-1}_t$, the ALE coordinate.  We assume that ${\mathcal{A}}\left(\hat{{\bf x}},t\right)$ is differentiable with respect to $t$ for all $\hat{{\bf x}}\in \Omega_{t_0}$, and define the velocity of the ALE frame as
\begin{equation}\label{definition_w}
{\bf w}\left({\hat{\bf x}},t\right)=\frac{\partial{\mathcal{A}}}{\partial t}\left({\hat{\bf x}},t\right).
\end{equation}
Given an Eulerian coordinate ${\bf x} \in \Omega_t$, its corresponding ALE coordinate $\hat{\bf x}_1 \in \Omega_{t_0}$ should be distinguished from its material (or Lagrangian) coordinate $\hat{\bf x}_2 \in \Omega_{t_0}$ as shown in Figure \ref{aleformulation}. In fact $\hat{\bf x}_2 \in \Omega_{t_0}$ (not necessarily the same as $\hat{\bf x}_1$) maps to ${\bf x} \in \Omega_t$ via the Lagrangian mapping, i.e., the trajectory of a material particle at $\hat{\bf x}_2$:
\begin{equation}\label{lagrangian_mapping}
\mathcal{F}_t: \hat{{\bf x}} \mapsto 
{\bf x}=\mathcal{F}\left({\hat{\bf x}},t\right),
\end{equation}
and the velocity of the material particle at $\hat{\bf x} \in \Omega_{t_0}$ is defined by
\begin{equation}
{\bf u}\left({\hat{\bf x}},t\right)=\frac{\partial{\mathcal{F}}}{\partial t}.
\end{equation}
\begin{figure}[h!]
\centering
\includegraphics[width=3.5in,angle=0]{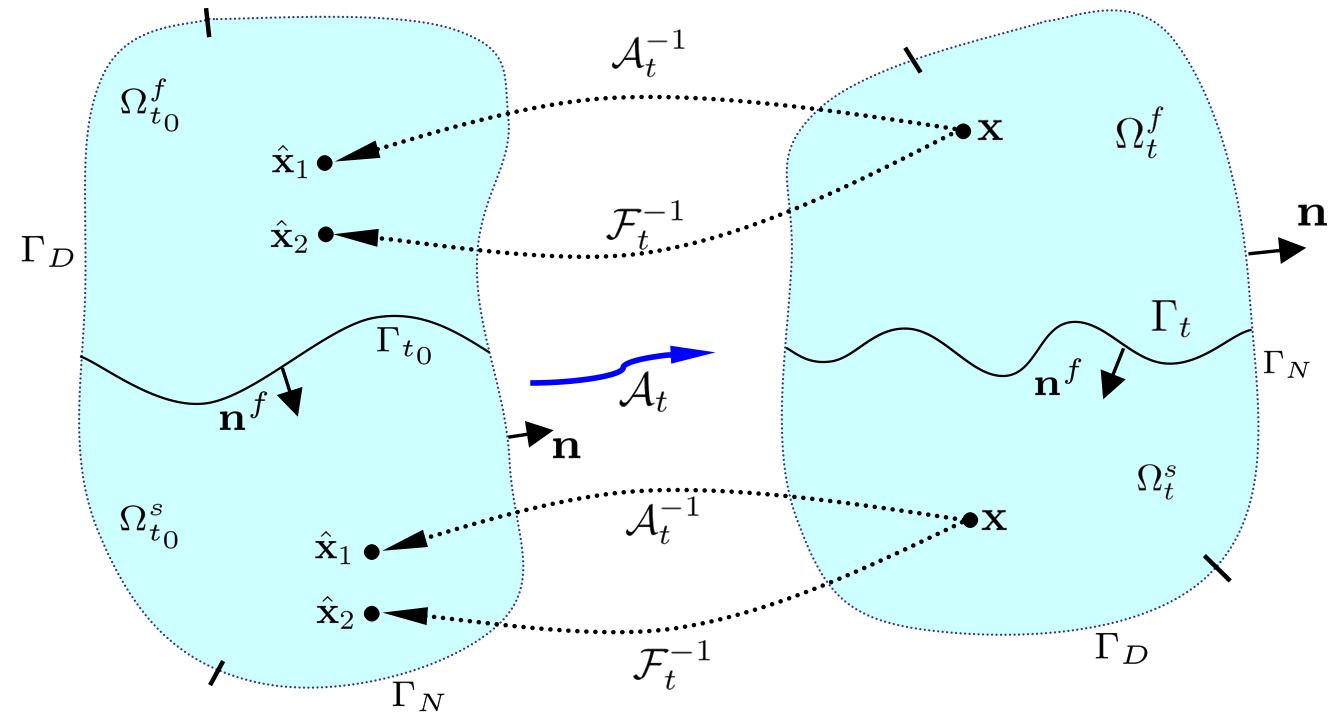}
\caption {\scriptsize ALE mapping from $\Omega_{t_0}$ to $\Omega_t$. Also shows the comparison between ALE mapping and Lagrangian mapping with Eulerian coordinate ${\bf x}$, ALE coordinate $\hat{\bf x}_1$ and material (Lagrangian) coordinate $\hat{\bf x}_2$. $\Gamma_t=\overline{\Omega}_t^f \cap \overline{\Omega}_t^s$ and $\Omega_t=\overline{\Omega}_t^f \cup \overline{\Omega}_t^s$, $\partial\Omega_t=\Gamma_D\cup\Gamma_N$.} 
\label{aleformulation}
\end{figure}

\begin{remark}
Although the Lagrangian configuration and the ALE configuration are not generally the same, both are chosen to have the initial configuration $\Omega_{t_0}$ in this article. We shall also construct the ALE mapping such that $\mathcal{A}_t\left(\Omega_{t_0}\right)$ coincides with $\mathcal{F}_t\left(\Omega_{t_0}\right)$ at all boundaries including the fluid-solid interface: $\mathcal{A}_t\left(\partial{\Omega_{t_0}}\right)=\mathcal{F}_t\left(\partial{\Omega_{t_0}}\right)$ and $\mathcal{A}_t\left(\partial{\Gamma_{t_0}}\right)=\mathcal{F}_t\left(\partial{\Gamma_{t_0}}\right)$.
\end{remark}
\begin{remark}
The ALE mapping is the mapping that is actually used to move the domain in this article, and the purpose of introducing the Lagrangian mapping is to discuss its related variables, such as particle velocity ${\bf u}$ and solid deformation tensor ${\bf F}$, which will be defined in the following context.
\end{remark}

Formulated in the current configuration, the conservation of momentum takes the same form in the fluid and solid:
\begin{equation} \label{momentum_equation}
\rho\frac{d{\bf u}\left({\bf x},t\right)}{dt}
={\rm div}\left({\bm\sigma}\right)+\rho{\bf g},
\end{equation}
with $\rho$, ${\bf g}$, ${\bf u}$ and ${\bm\sigma}$ being the density, gravity acceleration, velocity and Cauchy stress tensor respectively. Here we use the notation $
{\rho}=\left \{ 
\begin{matrix}
{{\rho}^f \quad in \quad \Omega_t^f} \\
{{\rho}^s \quad in \quad \Omega_t^s} \\
\end{matrix}\right.
$, with the superscript $f$ and $s$ denote fluid and solid respectively, and similar notations are also applied to ${\bf u}$ and $\bm\sigma$. In the above, $\frac{d({\cdot})}{dt}$ is the total derivative computed along the trajectory of a material particle at ${\bf x}$, i.e. via the Lagrangian mapping:
\begin{equation}
\frac{d{\bf u}\left({\bf x},t\right)}{dt}
=\frac{d{\bf u}\left(\mathcal{F}_t\left(\hat{\bf x}\right),t\right)}{dt}
=\left.\frac{\partial{\bf u}}{\partial t}\right|_{{\bf x}=\mathcal{F}\left(\hat{\bf x},t\right)}
+\left({\bf u}\cdot\nabla\right){\bf u}.
\end{equation}
Replacing the above partial time derivative by the total derivative of
\begin{equation}
\frac{d{\bf u}\left(\mathcal{A}_t\left(\hat{\bf x}\right),t\right)}{dt}
=\left.\frac{\partial {\bf u}}{\partial t}\right|_{{\bf x}=\mathcal{A}\left(\hat{\bf x},t\right)}
+\left({\bf w}\cdot\nabla\right){\bf u}
\end{equation}
leads to the ALE formulation of (\ref{momentum_equation})
\begin{equation} \label{momentum_equation_ale}
\rho\frac{d{\bf u}\left(\mathcal{A}_t\left(\hat{\bf x}\right),t\right)}{dt}
+\rho\left(\left({\bf u}-{\bf w}\right)\cdot\nabla\right){\bf u}
={\rm div}\left({\bm\sigma}\right)+\rho{\bf g}.
\end{equation}
We consider here both an incompressible flow and incompressible solid:
\begin{equation} \label{constitutive_model}
{\bm\sigma}={\bm\tau}-p{\bf I},
\end{equation}
with ${\bm\tau}$ being the deviatoric part of the stress tensor. For a Newtonian fluid in $\Omega^f_t$,
\begin{equation}
{\bm\tau}={\bm\tau}^f=\mu^f{\rm D}{\bf u}=\mu^f\left(\nabla {\bf u}+\nabla^{\scriptsize T} {\bf u}\right),
\end{equation}
and for a hyperelastic solid \cite{belytschko2013nonlinear} in $\Omega^s_t$,
\begin{equation} \label{constitutive_solid}
{\bm\tau}={\bm\tau}^s=J_{\mathcal{F}_t}^{-1}\frac{\partial\Psi\left({\bf F}\right)}{\partial{\bf F}}{\bf F}^T,
\end{equation}
with
\begin{equation}\label{definition_f}
{\bf F}
=\frac{\partial {\mathcal{F}\left(\hat{\bf x}, t\right)}}{\partial\hat{\bf x}}
\end{equation}
being the deformation tensor of the solid, $J_{\mathcal{F}_t}$ being the determinant of {\bf F}, and $\Psi\left({\bf F}\right)$ being the energy function of the hyperelastic solid material. Combining with the continuity equation
\begin{equation}\label{continuity}
\nabla\cdot{\bf u}=0 \quad {\rm in} \quad \Omega_t,
\end{equation}
the FSI system is completed with continuity of the velocity and normal stress conditions on the interface $\Gamma_t$:
\begin{equation}\label{interfaceBC1}
	{\bf u}^f={\bf u}^s, \quad {\bm \sigma}^f{\bf n}^f= {\bm \sigma}^s{\bf n}^f,
\end{equation}
and (for simplicity of this exposition) homogeneous Dirichlet and Neumann boundaries on $\Gamma_D$ and $\Gamma_N$ respectively:
\begin{equation}\label{dirichlet_neumann}
	{\bf u}=0, \quad {\bm \sigma}{\bf n}=0,
\end{equation}
with $\Gamma_D\cup\Gamma_N=\partial\Omega_t$ as shown in Figure \ref{aleformulation}.

\section{Finite element weak formulation}
\label{weak_formulation}
Let $L^2(\omega)$ be the square integrable functions in domain $\omega$, endowed with norm $\left\|u\right\|_{0,\omega}^2=\int_\omega \left|u\right|^2$. Let $H^1(\omega)=\left\{u:u \in L^2(\omega), \nabla u\in L^2(\omega)^d\right\}$ with the norm denoted by $\left\|u\right\|_{1,\omega}^2=\left\|u\right\|_{0,\omega}^2+\left\|\nabla u\right\|_{0,\omega}^2$. We also denote by $H_0^1\left(\omega\right)$ the
subspace of $H^1\left(\omega\right)$ whose functions have zero value on the Dirichlet boundary of $\omega$.

According to equation (\ref{ale_mapping}) we construct $\Omega_t$ from $\Omega_{t_0}$, so a function $v\in H_0^1(\Omega_t)$ is one-to-one corresponding to a function $\hat{v}\in H_0^1(\Omega_{t_0})$ via
\begin{equation}
v\circ\mathcal{A}_t=\hat{v}.
\end{equation}

Choosing a test function ${\bf v}\left( {\bf x}\right)={\bf v}\circ\mathcal{A}_t\left(\hat{\bf x}\right)=\hat{\bf v}\left(\hat{\bf x}\right)$, the weak formulation may be obtained by multiplying ${\bf v}$ on both sides of equation (\ref{momentum_equation_ale}), and integrating the stress term by parts in domain $\Omega_t^f$ and $\Omega_t^s$ separately:
\begin{equation}\label{weak_form1_f}
\begin{split}
&\rho^f\int_{\Omega_t^f}\frac{d{\bf u}\left(\mathcal{A}_t\left(\hat{\bf x}\right),t\right)}{dt}\cdot{\bf v}
+\rho^f\int_{\Omega_t^f}\left(\left({\bf u}-{\bf w}\right)\cdot\nabla\right){\bf u}\cdot{\bf v}\\
&+\frac{\mu^f}{2}\int_{\Omega_t^f}{\rm D}{\bf u}:{\rm D}{\bf v}
-\int_{\Omega_t^f}p\nabla\cdot {\bf v}
=\int_{\partial\Omega_t^f}{\bm{\sigma}}^f{\bf n}^f\cdot{\bf v}
+\rho^f\int_{\Omega_t^f}{\bf g}\cdot{\bf v}.
\end{split}
\end{equation}	

\begin{equation}\label{weak_form1_s}
\begin{split}
&\rho^s\int_{\Omega_t^f}\frac{d{\bf u}\left(\mathcal{A}_t\left(\hat{\bf x}\right),t\right)}{dt}\cdot{\bf v}
+\rho^s\int_{\Omega_t^f}\left(\left({\bf u}-{\bf w}\right)\cdot\nabla\right){\bf u}\cdot{\bf v} \\
&+\int_{\Omega_{t_0}^s}\frac{\partial\Psi}{\partial{\bf F}}:\nabla_{\hat{\bf x}}{\bf v}
-\int_{\Omega_t^s}p\nabla\cdot {\bf v}
=\int_{\partial\Omega_t^s}{\bm{\sigma}}^s\left(-{\bf n}^f\right)\cdot{\bf v}
+\rho^s\int_{\Omega_t^s}{\bf g}\cdot{\bf v}.
\end{split}
\end{equation}
We used $\frac{\partial\Psi}{\partial{\bf F}}{\bf F}^T:\nabla{\bf v}=\frac{\partial\Psi}{\partial{\bf F}}:\nabla{\bf v}{\bf F}=\frac{\partial\Psi}{\partial{\bf F}}:\nabla_{\hat{\bf x}}{\bf v}$ in the above deduction. Using the boundary conditions (\ref{interfaceBC1}) and (\ref{dirichlet_neumann}), we have the following equation by adding up (\ref{weak_form1_f}) and (\ref{weak_form1_s}).
\begin{equation}\label{weak_form1}
\begin{split}
&\rho\int_{\Omega_t}\frac{d{\bf u}\left(\mathcal{A}_t\left(\hat{\bf x}\right),t\right)}{dt}\cdot{\bf v}
+\rho\int_{\Omega_t}\left(\left({\bf u}-{\bf w}\right)\cdot\nabla\right){\bf u}\cdot{\bf v}\\
&+\frac{\mu^f}{2}\int_{\Omega_t^f}{\rm D}{\bf u}:{\rm D}{\bf v} 
-\int_{\Omega_t}p\nabla\cdot {\bf v}
+\int_{\Omega_{t_0}^s}\frac{\partial\Psi}{\partial{\bf F}}:\nabla_{\hat{\bf x}}{\bf v}
=\rho\int_{\Omega_t}{\bf g}\cdot{\bf v}.
\end{split}
\end{equation}	

Using Jacobi$^\prime$s formula \cite{magnus2019matrix}, we have
\begin{equation}
\begin{split}
\frac{\partial J_{\mathcal{A}_t}}{\partial t}
&=trace\left(J_{\mathcal{A}_t}{\bf A}^{-1}{\frac{\partial{\bf A}}{\partial t}}\right)\\
&=trace\left(J_{\mathcal{A}_t}{\bf A}^{-1}\nabla_{\hat{\bf x}}{\frac{\partial\mathcal{A}_t}{\partial t}}\right)\\
&=J_{\mathcal{A}_t}\nabla\cdot{\frac{\partial\mathcal{A}_t}{\partial t}}
=J_{\mathcal{A}_t}\nabla\cdot{\bf w},
\end{split}
\end{equation}
with 
${\bf A}
=\frac{\partial {\mathcal{A}\left(\hat{\bf x}, t\right)}}{\partial\hat{\bf x}}
=\nabla_{\hat{\bf x}}\mathcal{A}_t.
$
Then we can take the time derivative outside the moving domain (conservative formulation \cite{nobile1999stability}), 
\begin{equation}\label{dt_outside}
\begin{split}
&\frac{d}{dt}\int_{\Omega_t}{\bf u}\left({\bf x},t\right)\cdot{\bf v}\left({\bf x}\right)
=\frac{d}{dt}\int_{\Omega_{t_0}}J_{\mathcal{A}_t}{\bf u}\left(\mathcal{A}_t\left(\hat{\bf x}\right),t\right)\cdot\hat{\bf v}\left(\hat{\bf x}\right) \\
&=\int_{\Omega_t}\frac{d{\bf u}\left({\bf x},t\right)}{dt}\cdot{\bf v}\left({\bf x}\right)
+\int_{\Omega_t}\left(\nabla\cdot{\bf w}\right){\bf u}\left({\bf x},t\right)\cdot{\bf v}\left({\bf x}\right).
\end{split}
\end{equation}
Substituting (\ref{dt_outside}) into (\ref{weak_form1}), using 
\begin{equation}
{\rm div}\left({\bf w}\otimes{\bf u}\right)
=\left({\bf w}\cdot\nabla\right){\bf u}+\left(\nabla\cdot{\bf w}\right){\bf u},
\end{equation}
and combining the weak form of continuity equation (\ref{continuity}), leads to the weak formulation of the FSI problem:
\begin{problem}\label{problem_weak_continuous_c} 
Given $\Omega_{t_0}$, $\Gamma_{t_0}$, ${\bf u}(\hat{\bf x}, t_0)$ and an ALE mapping $\mathcal{A}_t$ (consequently given ${\bf w}$ by (\ref{definition_w})), $\forall\hat{\bf x}\in\Omega_{t_0}$: $\forall t\in(0,T]$ find ${\bf u}({\bf x},t)={\bf u}(\mathcal{A}_t\left(\hat{\bf x}\right),t)\in H_0^1(\Omega_t)^d$ and $p({\bf x},t)=p(\mathcal{A}_t\left(\hat{\bf x}\right),t)\in L^2(\Omega_t)$, such that $\forall {\bf v}({\bf x})={\bf v}(\mathcal{A}_t\left(\hat{\bf x}\right))$, ${\bf v}\in {H}_0^1(\Omega_t)^d$ and $\forall q({\bf x})=q(\mathcal{A}_t\left(\hat{\bf x}\right))$, $q\in L^2(\Omega_t)$, the following equations hold:
\begin{equation}\label{weak_form1_c}
\begin{split}
&\rho\frac{d}{dt}\int_{\Omega_t}{\bf u}\left(\mathcal{A}_t\left(\hat{\bf x}\right),t\right)\cdot{\bf v}
+\rho\int_{\Omega_t}\left({\bf u}\cdot\nabla\right){\bf u}\cdot{\bf v}
+\rho\int_{\Omega_t}\left({\bf w}\otimes{\bf u}\right):\nabla{\bf v} \\
&+\frac{\mu^f}{2}\int_{\Omega_t^f}{\rm D}{\bf u}:{\rm D}{\bf v} 
-\int_{\Omega_t}p\nabla\cdot {\bf v}
+\int_{\Omega_{t_0}^s}\frac{\partial\Psi}{\partial{\bf F}}\left({\bf F}\right):\nabla_{\hat{\bf x}}{\bf v}
=\rho\int_{\Omega_t}{\bf g}\cdot{\bf v},
\end{split}
\end{equation}	
\begin{equation}\label{weak_form2_c}
-\int_{\Omega_t} q\nabla \cdot {\bf u}=0,
\end{equation}
and
\begin{equation}
\mathcal{A}_t\left(\partial{\Omega_{t_0}}\right)=\mathcal{F}_t\left(\partial{\Omega_{t_0}}\right), \quad \mathcal{A}_t\left(\partial{\Gamma_{t_0}}\right)=\mathcal{F}_t\left(\partial{\Gamma_{t_0}}\right),
\end{equation}
with $\Gamma_{t_0}$ and $\partial\Omega_{t_0}$ being the initial interface and outer boundary respectively, as shown in Figure \ref{aleformulation}, and $\mathcal{F}_t$ being the Lagrangian mapping as defined in (\ref{lagrangian_mapping}).
\end{problem}

\section{Discretisation in space and time}
\label{section_discretization}
Define a stable finite element space, such as the Taylor-Hood elements, for the velocity-pressure pair $\left({\bf u}, p\right)$ in $\Omega_{t_0}$:
$$
V^h\left(\Omega_{t_0}\right)=span\left\{\hat{\varphi}_1,\cdots,\hat{\varphi}_{N^u}\right\} \subset H_0^1\left(\Omega_{t_0}\right)
$$
and
$$
L^h(\Omega_{t_0})=span\left\{\hat{\phi}_1,\cdots,\hat{\phi}_{N^p}\right\} \subset L^2\left(\Omega_{t_0}\right),
$$
with $N^u$ and $N^p$ being the number of nodal variables for each velocity component and pressure respectively. Then
$$
V^h\left(\Omega_t\right)=\left\{\varphi_h: \varphi_h=\hat{\varphi}_h\circ\mathcal{A}_t^{-1}, \hat{\varphi}_h\in V^h\left(\Omega_{t_0}\right)\right\},
$$
and
$$
L^h\left(\Omega_t\right)=\left\{\phi_h: \phi_h=\hat{\phi}_h\circ\mathcal{A}_t^{-1}, \hat{\phi}_h\in L^h\left(\Omega_{t_0}\right)\right\}.
$$

Using the backward Euler scheme, equation (\ref{weak_form1_c}) and (\ref{weak_form2_c}) can be discretised respectively as follows:
\begin{equation}\label{weak_form1_time1}
\begin{split}
&\frac{\rho}{\delta t}\int_{\Omega_{t_{n+1}}}{\bf u}_{n+1}^h\cdot{{\bf v}}
-\frac{\rho}{\delta t}\int_{\Omega_{t_n}}{\bf u}_n^h\cdot{{\bf v}} 
+\rho\int_{\Omega_{t_{n+1}}}\left({\bf u}_{n+1}^h\cdot\nabla\right){\bf u}^h_{n+1}\cdot{\bf v}\\
&+\rho {\mathcal{I}}\left(\xi(t)\right)
+\frac{\mu^f}{2}\int_{\Omega_{t_{n+1}}^f}{\rm D}{\bf u}_{n+1}^h:{\rm D}{\bf v}
-\int_{\Omega_{t_{n+1}}}p_{n+1}^h\nabla \cdot {\bf v} \\
&+\int_{\Omega_{t_0}^s}\frac{\partial\Psi}{\partial{\bf F}}\left({\bf F}_{n+1}\right):\nabla_{\hat{\bf x}}{{\bf v}}
=\int_{\Omega_{t_{n+1}}}\rho{\bf g}\cdot{\bf v},
\end{split}
\end{equation}
and
\begin{equation}\label{weak_form2_time1}
-\int_{\Omega_{t_{n+1}}} q\nabla\cdot {\bf u}_{n+1}^h=0.
\end{equation}
In the above
\begin{equation}
\xi(t)=\int_{\Omega_{t}}\left({\bf w}(t)\otimes{\bf u}_{n+1}^h\right):\nabla{\bf v},
\end{equation}
and $\delta t\mathcal{I}(\xi)$ is a quadrature formula used to compute $\int_{t_n}^{t_{n+1}}\xi(t)$. In order to have an unconditionally stable scheme, which will be proved in Section \ref{stability_energy}, the mid-point integration is adopted for
\begin{equation}\label{quadrature_2d}
{\mathcal{I}}\left(\xi\right)=\xi\left(t_{n+1/2}\right)
\end{equation}
in the two dimensional case, and the Simpson formula is adopted in the three dimensional case:
\begin{equation}\label{quadrature_3d}
{\mathcal{I}}\left(\xi\right)
=\frac{2}{3}\xi\left(t_{n+1/2}\right)
+\frac{1}{6}\xi\left(t_n\right)
+\frac{1}{6}\xi\left(t_{n+1}\right).
\end{equation}
Due to the definition of the deformation tensor ${\bf F}$ (\ref{definition_f}) and ALE velocity ${\bf w}$ (\ref{definition_w}), we have
\begin{equation}\label{expression_f_time}
\frac{{\bf F}_{n+1}-{\bf F}_n}{\delta t}
=\frac{{\bf F}_{n+1}\circ{\mathcal{F}}_{t_{n+1}}\left(\hat{\bf x}\right)
-{\bf F}_n\circ{\mathcal{F}}_{t_n}\left(\hat{\bf x}\right)}{\delta t}
\approx\nabla_{\hat{\bf x}}{\bf u}_{n+1},
\end{equation}
and
\begin{equation}\label{expression_w_time}
\frac{{\bf x}_{n+1}-{\bf x}_n}{\delta t}
=\frac{{\mathcal{A}}_{t_{n+1}}\left(\hat{\bf x}\right)
-{\mathcal{A}}_{t_n}\left(\hat{\bf x}\right)}{\delta t}
\approx{\bf w}_{n+1}.
\end{equation}
Therefore ${\bf F}_{n+1}$ and $\Omega_{t_{n+1}}$ in (\ref{weak_form1_time1}) can be updated as follows:
\begin{equation}\label{update_f}
{\bf F}_{n+1}={\bf F}_n+\delta t\nabla_{\hat{\bf x}}{\bf u}_{n+1},
\end{equation}
and
\begin{equation}\label{update_omega}
\Omega_{t_{n+1}}={\mathcal{A}}_{t_{n+1}}\left(\Omega_{t_0}\right)
=\left\{{\bf x}:{\bf x}={\bf x}_n+\delta t{\bf w}_{n+1},{\bf x}_n\in {\mathcal{A}}_{t_n}\left(\Omega_{t_0}\right)\right\}.
\end{equation} 

Up to now we have not stated how to construct ${\bf w}$ (or $\mathcal{A}_t$), because very often we only need to construct the ALE mapping $\mathcal{A}_t$ at a discrete time level, that is to say computing $\mathcal{A}_{t_{n+1}}$ for $n=0,1,\ldots$ at each time step. This will be explained in the rest of this section.

We solve the following static linear elastic equation in $\Omega_{t_{n+1}}$ in order to compute ${\bf w}_{n+1}$, and take ${\bf w}(t)={\bf w}_{n+1}$ for $t\in (t_n, t_{n+1}]$.
Given the following boundary data:

\begin{equation}\label{mesh_bc1}
{\bf w}_{n+1}\cdot{\bf n}=0 \qquad {\rm on} \quad \partial{\Omega_{t_{n+1}}},
\end{equation}
and
\begin{equation}\label{mesh_bc2}
{\bf w}_{n+1}={\bf u}_{n+1}^h \qquad {\rm on} \quad \Gamma_{t_{n+1}},
\end{equation}
find ${\bf w}_{n+1}\in V^h(\Omega_{t_{n+1}})^d$ such that $\forall {\bf z}\in V^h(\Omega_{t_{n+1}})^d$, the following equation holds:
\begin{equation} \label{mesh_equation}
\frac{\mu}{2}\int_{\Omega_{t_{n+1}}}{\rm D}{\bf w}_{n+1}:{\rm D}{\bf z}
+\lambda\int_{\Omega_{t_{n+1}}}\left(\nabla\cdot{\bf w}_{n+1}\right)\left(\nabla\cdot{\bf z}\right)=0,
\end{equation}
with $\mu$ and $\lambda$ being the Lam$\rm{\acute{e}}$ constants used here as pseudo-solid parameters. It is well known that the above elliptic problem (\ref{mesh_bc1}) to (\ref{mesh_equation}) has a unique solution ${\bf w}\in V^h\left(\Omega_{t_{n+1}}\right)$ \cite{brenner2007mathematical}. As a result, we are able to construct a mapping for $t\in(t_n, t_{n+1}]$,
\begin{equation}\label{construct_a_tn_t}
\mathcal{A}_{t_n,t}: \Omega_{t_n}\rightarrow\Omega_t, \quad \mathcal{A}_{t_n,t}\left({\bf x}_n\right)={\bf x}_n+(t-t_n){\bf w}_{n+1},
\end{equation} 
and further
\begin{equation}\label{ale_mapping_contruction}
\mathcal{A}_t=\mathcal{A}_{t_0,t_1}^{-1}\circ\mathcal{A}_{t_1,t_2}^{-1} \dots \circ \mathcal{A}_{t_n,t}^{-1}.
\end{equation}
From the computational point of view, knowing the ALE velocity ${\bf w}_{n+1}$ at the discrete level is sufficient.

Putting all the above together, the discrete ALE-FSI problem reads:
\begin{problem}\label{problem_weak_time} 
Given $\mathcal{A}_{t_n}$ and ${\bf u}_n^h={\bf u}(\mathcal{A}_{t_n}\left(\hat{\bf x}\right), t_n)$, $\forall\hat{\bf x}\in\Omega_{t_0}$ find ${\bf u}_{n+1}^h={\bf u}(\mathcal{A}_{t_{n+1}}\left(\hat{\bf x}\right),t_{n+1})\in V^h(\Omega_{t_{n+1}})^d$, $p_{n+1}^h=p(\mathcal{A}_{t_{n+1}}\left(\hat{\bf x}\right),t_{n+1})$ $\in L^h(\Omega_{t_{n+1}})$, and ${\bf w}_{n+1}\in V^h(\Omega_{t_{n+1}})^d$ (consequently an ALE mapping $\mathcal{A}_{t_{n+1}}$ by (\ref{ale_mapping_contruction})), such that $\forall {\bf v}({\bf x})={\bf v}(\mathcal{A}_{t_{n+1}}\left(\hat{\bf x}\right))$, ${\bf v}\in V^h(\Omega_{t_{n+1}})^d$, $\forall q({\bf x})=q(\mathcal{A}_{t_{n+1}}\left(\hat{\bf x}\right))$, $q\in L^h(\Omega_{t_{n+1}})$ and $\forall {\bf z}\in V^h(\Omega_{t_{n+1}})^d$, the following equation system holds:
\begin{equation}\label{weak_form_final}
\begin{split}
&\frac{\rho}{\delta t}\int_{\Omega_{t_{n+1}}}{\bf u}_{n+1}^h\cdot{{\bf v}}
-\frac{\rho}{\delta t}\int_{\Omega_{t_n}}{\bf u}_n^h\cdot{{\bf v}} 
+\rho\int_{\Omega_{t_{n+1}}}\left({\bf u}_{n+1}^h\cdot\nabla\right){\bf u}^h_{n+1}\cdot{\bf v}\\
&+\rho {\mathcal{I}}\left(\xi(t)\right)
+\frac{\mu^f}{2}\int_{\Omega_{t_{n+1}}^f}{\rm D}{\bf u}_{n+1}^h:{\rm D}{\bf v}
-\int_{\Omega_{t_{n+1}}}p_{n+1}^h\nabla \cdot {\bf v} \\
&-\int_{\Omega_{t_{n+1}}} q\nabla\cdot {\bf u}_{n+1}^h
+\int_{\Omega_{t_0}^s}\frac{\partial\Psi}{\partial{\bf F}}\left({\bf F}_{n+1}\right):\nabla_{\hat{\bf x}}{{\bf v}} \\
&+\frac{\mu}{2}\int_{\Omega_{t_{n+1}}}{\rm D}{\bf w}_{n+1}:{\rm D}{\bf z}
+\lambda\int_{\Omega_{t_{n+1}}}\left(\nabla\cdot{\bf w}_{n+1}\right)\left(\nabla\cdot{\bf z}\right)
=\int_{\Omega_{t_{n+1}}}\rho{\bf g}\cdot{\bf v}.
\end{split}
\end{equation}
with quadrature formula (\ref{quadrature_2d}) in 2D or (\ref{quadrature_3d}) in 3D, updating ${\bf F}_{n+1}$ by (\ref{update_f}) and updating $\Omega_{t_{n+1}}$ by (\ref{update_omega}). In addition, the above FSI system equations are completed with the Dirichlet and Neumann boundary conditions (\ref{dirichlet_neumann}) for the momentum and continuity equations (\ref{weak_form1_time1}) and (\ref{weak_form2_time1}), and with the boundary conditions (\ref{mesh_bc1}) and (\ref{mesh_bc2}) for the mesh equation (\ref{mesh_equation}).
\end{problem}

Problem \ref{problem_weak_time} is a highly non-linear system, so we solve it iteratively as described in the following Algorithm \ref{algorithm_problme2}.
\begin{algorithm}
\caption{Solve Problem \ref{problem_weak_time} for $\mathcal{A}_{t_{n+1}}$ (or ${\bf w}_{n+1}^h$), ${\bf u}_{n+1}^h$ and $p_{n+1}^h$}
\begin{algorithmic}\label{algorithm_problme2}
\REQUIRE $\Omega_{t_n}=\mathcal{A}_{t_n}\left(\Omega_{t_0}\right)$, ${\bf u}_n^h$ and a tolerance {\em tol}
\ENSURE  $\Omega_{t_{n+1}^k}=\Omega_{t_n}$, ${\bf u}_{n+1}^k={\bf u}_n^h$ and $k=0$
\REPEAT
\STATE 1. solve the mesh equation (\ref{mesh_equation}) for ${\bf w}_{n+1}^{k+1}$ using boundary conditions (\ref{mesh_bc1}) and (\ref{mesh_bc2})
\STATE 2. update $\Omega_{t_{n+1}^{k+1}}=\Omega_{t_{n+1}^{k}}+\delta t{\bf w}_{n+1}^{k+1}$ using (\ref{update_omega})
\STATE 3. solve the FSI system (\ref{weak_form1_time1}) and (\ref{weak_form2_time1}) for ${\bf u}_{n+1}^{k+1}$ and $p_{n+1}^{k+1}$
\STATE 4. $\epsilon_k = \frac{\|{\bf u}_{n+1}^{k+1}-{\bf u}_{n+1}^k\|}{\|{\bf u}_{n+1}^k\|}$,  $k \leftarrow k+1$
\UNTIL{$\epsilon_k < tol$}
\end{algorithmic}
\end{algorithm}
\section{Stability analysis}
\label{stability_energy}
We shall deduce an energy stability result at the end of this section. In preparation for this we first prove the following lemmas.

\begin{lemma}\label{lamma_div}
If $\left({\bf u}, p, {\bf w}\right)$ is the solution of Problem \ref{problem_weak_time}, then ${\bf u}$ satisfies the following at $t=t_{n+1}$.
\begin{equation}
\int_{\Omega_t}\left({\bf u}\cdot\nabla\right){\bf u}\cdot{\bf u}=0.
\end{equation}
\end{lemma}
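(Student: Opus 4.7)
The plan is to establish the identity by integration by parts and then kill the remaining terms using the boundary conditions and the discrete incompressibility constraint. First I would rewrite the integrand componentwise using the identity
\begin{equation*}
({\bf u}\cdot\nabla){\bf u}\cdot{\bf u} \;=\; u_i\,\partial_i u_j\, u_j \;=\; \tfrac{1}{2}\,u_i\,\partial_i(u_j u_j) \;=\; \tfrac{1}{2}\,{\bf u}\cdot\nabla|{\bf u}|^2,
\end{equation*}
which turns the cubic convective term into a transport of the kinetic-energy density. Then I would recognize the right-hand side as a divergence modulo a multiple of $\nabla\cdot{\bf u}$, namely
\begin{equation*}
\tfrac{1}{2}\,{\bf u}\cdot\nabla|{\bf u}|^2 \;=\; \tfrac{1}{2}\,\nabla\cdot\bigl(|{\bf u}|^2{\bf u}\bigr) - \tfrac{1}{2}\,|{\bf u}|^2\,\nabla\cdot{\bf u}.
\end{equation*}

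Next I would integrate over $\Omega_{t_{n+1}}$ and apply the divergence theorem, producing a boundary term $\tfrac{1}{2}\int_{\partial\Omega_{t_{n+1}}} |{\bf u}|^2\,{\bf u}\cdot{\bf n}$ and a volume term $-\tfrac{1}{2}\int_{\Omega_{t_{n+1}}}|{\bf u}|^2\,\nabla\cdot{\bf u}$. The boundary term I would dispatch by splitting $\partial\Omega_{t_{n+1}} = \Gamma_D \cup \Gamma_N$: on $\Gamma_D$ the Dirichlet condition ${\bf u}=0$ from (\ref{dirichlet_neumann}) kills the integrand, and on $\Gamma_N$ the same mesh condition (\ref{mesh_bc1}) stipulating ${\bf w}\cdot{\bf n}=0$ together with the fact that the outer boundary is not material (so ${\bf u}\cdot{\bf n}=0$ there in the considered problem setting) finishes it off.

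For the remaining volume term I would appeal to the discrete continuity equation (\ref{weak_form2_time1}), which states $\int_{\Omega_{t_{n+1}}} q\,\nabla\cdot{\bf u}_{n+1}^h = 0$ for all $q\in L^h(\Omega_{t_{n+1}})$. Formally, testing against $q=\tfrac{1}{2}|{\bf u}_{n+1}^h|^2$ eliminates the term; combining with the vanishing boundary contribution yields the claim. The main obstacle I anticipate is that $|{\bf u}_{n+1}^h|^2$ does not in general lie in the discrete pressure space $L^h$ (for Taylor--Hood elements it is piecewise quartic while $L^h$ is piecewise linear), so the last step is formally only a continuous-level identity and relies on treating the incompressibility constraint as holding strongly enough to annihilate $\int|{\bf u}|^2\nabla\cdot{\bf u}$. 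I would therefore state the lemma as an identity inherited from the continuous ALE formulation, with the understanding that in the discrete analysis one regards the Taylor--Hood pair as satisfying $\nabla\cdot{\bf u}_{n+1}^h=0$ in the sense required for the subsequent energy estimate.
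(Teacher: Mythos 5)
Your proof takes essentially the same route as the paper's: reduce the convective term to $\tfrac{1}{2}\int_{\partial\Omega_t}|{\bf u}|^2\,{\bf u}\cdot{\bf n}-\tfrac{1}{2}\int_{\Omega_t}|{\bf u}|^2\,\nabla\cdot{\bf u}$ (the paper does this by integrating by parts twice rather than via the product-rule identity, but the outcome is identical), kill the boundary term by the enclosed-flow condition ${\bf u}\cdot{\bf n}=0$, and kill the volume term by incompressibility. The discrete-level gap you flag at the end is genuine but is equally present in the paper's own proof: the paper invokes the Sobolev embedding only to conclude $|{\bf u}|^2\in L^2$ and then cites the discrete continuity equation (\ref{weak_form2_time1}), which quantifies only over $q$ in the discrete pressure space $L^h$, so taking $q=|{\bf u}_{n+1}^h|^2$ is exactly the leap you describe; your explicit acknowledgement of it is, if anything, more careful than the published argument.
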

\begin{proof}
Noticing that
\begin{equation}\label{deduce1}
\int_{\Omega_t}\left({\bf u}\cdot\nabla\right){\bf u}\cdot{\bf u}
=\int_{\Omega_t}\nabla\cdot\left({\bf u}\otimes{\bf u}\right)\cdot{\bf u}
-\int_{\Omega_t}\left|{\bf u}\right|^2\nabla\cdot{\bf u},
\end{equation}
and integrating by parts:
\begin{equation}\label{deduce2}
\begin{split}
& \int_{\Omega_t}\left({\bf u}\cdot\nabla\right){\bf u}\cdot{\bf u}
=\int_{\partial\Omega_t}\left|{\bf u}\right|^2{\bf u}\cdot{\bf n}
-\int_{\Omega_t}\left({\bf u}\cdot\nabla\right){\bf u}\cdot{\bf u}
-\int_{\Omega_t}\left|{\bf u}\right|^2\nabla\cdot{\bf u}.\\
&\Rightarrow
\int_{\Omega_t}\left({\bf u}\cdot\nabla\right){\bf u}\cdot{\bf u}
=\frac{1}{2}\int_{\partial\Omega_t}\left|{\bf u}\right|^2{\bf u}\cdot{\bf n}
-\frac{1}{2}\int_{\Omega_t}\left|{\bf u}\right|^2\nabla\cdot{\bf u}.
\end{split}
\end{equation}
In the above $\int_{\partial\Omega_t}\left|{\bf u}\right|^2{\bf u}\cdot{\bf n}=0$, thanks to the enclosed flow ${\bf u}\cdot{\bf n}=0$ (\ref{dirichlet_neumann}). Using the Sobolev imbedding theorem \cite[Theorem 6 in Chapter 5] {mitrovic1997fundamentals}, we have $H^1\subset L^{\infty}$ in the two dimensional case and $H^1\subset L^6$ in the three dimensional case. Either $L^\infty$ or $L^6$ is included in $L^4$ because $\Omega_t$ has finite measure. Therefore ${\bf u}\in H^1 \subset L^4 \Rightarrow \left|{\bf u}\right|^2\in L^2$, and $\int_{\Omega_t}\left|{\bf u}\right|^2\nabla\cdot{\bf u}=0$ thanks to (\ref{weak_form2_time1}).
\end{proof}

\begin{lemma}\label{lamma_f_lamma}
If $\left({\bf u}, p, {\bf w}\right)$ is the solution of Problem \ref{problem_weak_time} then, for any ${\bf w}\in V^h\left(\Omega_t\right)$, ${\bf u}$ satisfies the following at $t=t_{n+1}$.
\begin{equation}\label{lemma2_1}
\xi(t)\equiv\int_{\Omega_{t}}\left({\bf w}\otimes{\bf u}\right):\nabla{\bf u}
=-\frac{1}{2}\int_{\Omega_t}\left|{\bf u}\right|^2\nabla\cdot{\bf w}.
\end{equation}
\end{lemma}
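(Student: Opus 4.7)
The plan is to reduce the left-hand side to a pure divergence-of-a-scalar-field form and then integrate by parts, exactly mirroring the strategy used in Lemma \ref{lamma_div}. Writing the tensor contraction in index notation,
\[
({\bf w}\otimes{\bf u}):\nabla{\bf u} \;=\; w_i\,u_j\,\partial_i u_j \;=\; \tfrac{1}{2}\,w_i\,\partial_i\bigl(u_j u_j\bigr) \;=\; \tfrac{1}{2}\,{\bf w}\cdot\nabla|{\bf u}|^2,
\]
so the identity I want is equivalent to
\[
\tfrac{1}{2}\int_{\Omega_t} {\bf w}\cdot\nabla|{\bf u}|^2 \;=\; -\tfrac{1}{2}\int_{\Omega_t} |{\bf u}|^2\,\nabla\cdot{\bf w}.
\]
This is just the divergence theorem applied to $|{\bf u}|^2{\bf w}$, producing $\tfrac{1}{2}\int_{\partial\Omega_t}|{\bf u}|^2\,{\bf w}\cdot{\bf n}$ as the leftover boundary contribution.

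Next I would argue that this boundary integral vanishes. The outer boundary $\partial\Omega_t$ is split into $\Gamma_D\cup\Gamma_N$: on $\Gamma_D$ we have ${\bf u}=0$ from (\ref{dirichlet_neumann}); on the rest of $\partial\Omega_t$, the mesh-velocity boundary condition (\ref{mesh_bc1}) gives ${\bf w}\cdot{\bf n}=0$. Since $\partial\Omega_t$ of the combined fluid-solid domain is exactly this set (the fluid-solid interface is interior and contributes nothing when one integrates over $\Omega_t$ as a whole), the boundary term is zero.

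Finally, I would add a brief regularity remark, in the same spirit as the closing lines of Lemma \ref{lamma_div}, to justify the manipulations: since ${\bf u}\in H^1(\Omega_t)^d$, the Sobolev embedding $H^1\hookrightarrow L^4$ (valid in $d=2,3$ on a bounded Lipschitz domain) gives $|{\bf u}|^2\in L^2(\Omega_t)$, while ${\bf w}\in V^h(\Omega_t)^d$ is smooth enough that $\nabla\cdot{\bf w}\in L^2(\Omega_t)$ and $\nabla|{\bf u}|^2\cdot{\bf w}\in L^1(\Omega_t)$, so both the Green's-identity step and the right-hand side integral are well defined. Assembling the three pieces yields (\ref{lemma2_1}).

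I do not expect a real obstacle here; the only subtle point is making sure one is entitled to treat $\partial\Omega_t$ as a closed surface where \emph{either} ${\bf u}$ \emph{or} ${\bf w}\cdot{\bf n}$ vanishes, which is exactly why the interface terms from $\Omega_t^f$ and $\Omega_t^s$ must cancel (they share the normal ${\bf n}^f$ with opposite sign on $\Gamma_t$, and ${\bf u}$ and ${\bf w}$ are both continuous across $\Gamma_t$ by (\ref{interfaceBC1}) and (\ref{mesh_bc2})). Flagging this continuity is the one thing worth writing out explicitly.
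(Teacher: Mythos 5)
Your proof is correct and follows essentially the same route as the paper's: a single integration by parts on $\int_{\Omega_t}({\bf w}\otimes{\bf u}):\nabla{\bf u}$, with the only cosmetic difference that you first rewrite the integrand as $\tfrac12\,{\bf w}\cdot\nabla|{\bf u}|^2$ while the paper keeps $\xi(t)$ on both sides and solves for it. If anything, your justification of the vanishing boundary term via ${\bf w}\cdot{\bf n}=0$ from (\ref{mesh_bc1}) is the more precise one, since the surface integrand is $|{\bf u}|^2\,{\bf w}\cdot{\bf n}$ rather than a multiple of ${\bf u}\cdot{\bf n}$ as the paper's wording suggests.
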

\begin{proof}
Integrating by parts we get
\begin{equation}\label{lemma2_2}
\xi(t)
=\int_{\partial\Omega_{t}}\left({\bf w}\otimes{\bf u}\right){\bf u}\cdot{\bf n}
-\int_{\Omega_t}\nabla\cdot\left({\bf w}\otimes{\bf u}\right)\cdot{\bf u}
\end{equation}
The boundary integral in (\ref{lemma2_2}) is zero due to the enclosed flow ${\bf u}\cdot{\bf n}=0$ condition (\ref{dirichlet_neumann}). The second term on the right-hand side of (\ref{lemma2_2}) can be expressed as:
\begin{equation}\label{lemma2_3}
\int_{\Omega_t}\nabla\cdot\left({\bf w}\otimes{\bf u}\right)\cdot{\bf u}
=\xi(t)
+\int_{\Omega_t}\left|{\bf u}\right|^2\nabla\cdot{\bf w},
\end{equation}
we then have (\ref{lemma2_1}) by substituting (\ref{lemma2_3}) into (\ref{lemma2_2}).
\end{proof}

\begin{lemma}\label{lamma_quadrature_lamma}
If $\left({\bf u}_{n+1}, p_{n+1}, {\bf w}_{n+1}\right)$ is the solution of Problem \ref{problem_weak_time}, then
\begin{equation}\label{lemman_quadrature}
\|{\bf u}_{n+1}\|_{0,\Omega_{t_{n+1}}}^2
-\|{\bf u}_{n+1}\|_{0,\Omega_{t_n}}^2
=\delta t{\mathcal{I}}\left(\eta\right),
\end{equation}
with
\begin{equation}
\eta(t)=\int_{\Omega_t}\left|{\bf u}_{n+1}\right|^2\nabla\cdot{\bf w}(t), \quad  t\in\left(t_n, t_{n+1}\right).
\end{equation}
\end{lemma}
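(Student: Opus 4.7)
The plan is to pull both $L^2$ norms back to the reference configuration $\Omega_{t_0}$, differentiate in $t$ using the Jacobi identity already derived in Section~\ref{weak_formulation}, and then invoke exactness of the chosen quadrature rule on a suitable polynomial space. Define $\hat{\bf u}_{n+1}(\hat{\bf x}) = {\bf u}_{n+1}(\mathcal{A}_{t_{n+1}}(\hat{\bf x}))$, the reference-frame representative of the discrete velocity at step $n+1$. For any $t\in[t_n,t_{n+1}]$, the change of variables ${\bf x}=\mathcal{A}_t(\hat{\bf x})$ gives
\[
\|{\bf u}_{n+1}\|_{0,\Omega_t}^2 = \int_{\Omega_{t_0}} |\hat{\bf u}_{n+1}(\hat{\bf x})|^2 \, J_{\mathcal{A}_t}(\hat{\bf x}) \, d\hat{\bf x},
\]
where ${\bf u}_{n+1}$ at intermediate times is understood as the pushforward of the fixed reference field $\hat{\bf u}_{n+1}$ along $\mathcal{A}_t$, exactly as ${\bf w}(t)={\bf w}_{n+1}$ is transported. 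Differentiating in $t$ and invoking the identity $\partial_t J_{\mathcal{A}_t} = J_{\mathcal{A}_t}\nabla\cdot{\bf w}(t)$, then pushing back to the current domain, yields $\frac{d}{dt}\|{\bf u}_{n+1}\|_{0,\Omega_t}^2 = \eta(t)$. The fundamental theorem of calculus on $[t_n,t_{n+1}]$ therefore gives
\[
\|{\bf u}_{n+1}\|_{0,\Omega_{t_{n+1}}}^2 - \|{\bf u}_{n+1}\|_{0,\Omega_{t_n}}^2 = \int_{t_n}^{t_{n+1}} \eta(t)\,dt.
\]

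It then remains to show $\int_{t_n}^{t_{n+1}} \eta(t)\,dt = \delta t\,\mathcal{I}(\eta)$, i.e.\ that the quadrature is exact on $\eta$. By the construction (\ref{construct_a_tn_t}), $\mathcal{A}_t$ is affine in $t$ on the interval, so the matrix ${\bf A}(t)=\nabla_{\hat{\bf x}}\mathcal{A}_t$ is affine in $t$ pointwise in $\hat{\bf x}$. Consequently $J_{\mathcal{A}_t}=\det{\bf A}(t)$ is a polynomial of degree at most $d$ in $t$, so $\partial_t J_{\mathcal{A}_t}$ has degree at most $d-1$, and therefore $\eta(t)=\int_{\Omega_{t_0}}|\hat{\bf u}_{n+1}|^2 \, \partial_t J_{\mathcal{A}_t}\,d\hat{\bf x}$ is itself a polynomial in $t$ of degree at most $d-1$. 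The midpoint rule (\ref{quadrature_2d}) integrates polynomials of degree $\leq 1$ exactly, which covers $d=2$, while Simpson's rule (\ref{quadrature_3d}) is exact on polynomials of degree $\leq 3$, which more than covers the required degree $\leq 2$ for $d=3$. In both cases the quadrature identity follows, closing the proof.

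The main obstacle is conceptual rather than computational: one must interpret $|{\bf u}_{n+1}|^2$ consistently as a fixed reference-frame function pushed forward to each $\Omega_t$, so that it is legitimate to pull it out of the time derivative inside the reference integral and isolate the $t$-dependence in $J_{\mathcal{A}_t}$. Once this is properly set up, the remainder reduces to Jacobi's formula combined with a polynomial degree count matched to the quadrature order, which is precisely why (\ref{quadrature_2d}) and (\ref{quadrature_3d}) were chosen.
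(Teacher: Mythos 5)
Your proof is correct and follows essentially the same route as the paper's: identify $\eta(t)$ as $\tfrac{d}{dt}\|{\bf u}_{n+1}\|_{0,\Omega_t}^2$ by pulling back to a fixed domain and applying Jacobi's formula, then observe that the affine-in-$t$ construction (\ref{construct_a_tn_t}) makes $\eta$ a polynomial of degree at most $d-1$, on which the midpoint and Simpson rules are exact. The only (immaterial) differences are that you pull back to $\Omega_{t_0}$ rather than $\Omega_{t_n}$ and count degrees via $\det{\bf A}(t)$ directly instead of via the cofactor matrix ${\bf C}_{\mathcal{A}_{t_n,t}}$.
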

\begin{proof}
Since
\begin{eqnarray}
\begin{split}
&\eta(t)=\int_{\Omega_{t_n}}J_{\mathcal{A}_{t_n,t}}\left|{\bf u}_{n+1}\right|^2\left(\frac{\partial\mathcal{A}_{t_n,t}^{-1}}{\partial{\bf x}}\nabla_{{\bf x}_n}\right)\cdot{\bf w}(t) \\
& =\int_{\Omega_{t_n}}\left|{\bf u}_{n+1}\right|^2\left({\bf C}_{\mathcal{A}_{t_n, t}}\nabla_{{\bf x}_n}\right)\cdot{\bf w}(t),
\end{split}
\end{eqnarray}
where ${\bf C}_{\mathcal{A}_{t_n, t}}$ is the cofactor matrix of $\frac{\partial\mathcal{A}_{t_n, t}}{\partial{\bf x}}$. According to the way we construct ${\mathcal{A}_{t_n, t}}$ (\ref{construct_a_tn_t}), we know ${\bf C}_{\mathcal{A}_{t_n, t}}$ is a polynomial in time of degree $d-1$ \cite{nobile1999stability}, with $d=2,3$ being the space dimension. Also ${\bf w}(t)={\bf w}_{n+1}$ is a constant for $t\in (t_n,t_{n+1}]$, so $\eta(t)$ is linear in time when $d=2$ and quadratic when $d=3$, and a mid-point integration ($d=2$) or Simpson formula ($d=3$) would exactly compute $\int_{t_n}^{t_{n+1}}\eta(t)$. This is to say
\begin{equation}\label{lemma_qua_1}
{\mathcal{I}}\left(\eta\right)
=
\int_{t_n}^{t_{n+1}}\eta(t).
\end{equation}
Noticing that for $t\in\left(t_n, t_{n+1}\right)$,
\begin{equation}
\begin{split}
&\frac{d}{dt}\int_{\Omega_t}\left|{\bf u}_{n+1}\right|^2
=\frac{d}{dt}\int_{\Omega_{t_n}}J_{\mathcal{A}_{t_n,t}}\left|{\bf u}_{n+1}\right|^2 \\
&=\int_{\Omega_{t_n}}J_{\mathcal{A}_{t_n,t}}\left|{\bf u}_{n+1}\right|^2\nabla_{\bf x}{\bf w}(t)
=\eta(t),
\end{split}
\end{equation}
and using (\ref{lemma_qua_1}), we finally have (\ref{lemman_quadrature}).
\end{proof}

\begin{lemma}\label{lamma_phi}
Define potential energy of the solid:
\begin{equation}
E\left(t\right)
=\int_{\Omega_{t_0}^s}\Psi\left({\bf F}\right).
\end{equation}
If $\left({\bf u}_{n+1}, p_{n+1},{\bf w}_{n+1}\right)$ is the solution of Problem \ref{problem_weak_time} and $\Psi\left({\bf F}\right)$ is $C^1$ convex on the set of second order tensors \cite{Boffi_2016}, then
\begin{equation}\label{lemma_energy}
\delta t\int_{\Omega_{t_0}^s}\frac{\partial\Psi}{\partial{\bf F}}\left({\bf F}_{n+1}\right):\nabla_{\hat{\bf x}}{\bf u}_{n+1}
\ge
E\left(t_{n+1}\right)-E\left(t_n\right).
\end{equation}
\begin{proof}
Let 
\begin{equation}
w(t)=\Psi\left({\bf F}_n+t\left({\bf F}_{n+1}-{\bf F}_n\right)\right),
\end{equation}
then
\begin{equation}
w^\prime(t)=\frac{\partial\Psi}{\partial{\bf F}}\left({\bf F}_n+t\left({\bf F}_{n+1}-{\bf F}_n\right)\right):\left({\bf F}_{n+1}-{\bf F}_n\right).
\end{equation}
Due to the convexity assumption of $\Psi\left({\bf F}\right)$, we have
\begin{equation}
w^\prime(1)\ge w(1)-w(0).
\end{equation}
This gives:
\begin{equation}
\frac{\partial\Psi}{\partial{\bf F}}\left({\bf F}_{n+1}\right):\left({\bf F}_{n+1}-{\bf F}_n\right)
\ge
\Psi\left({\bf F}_{n+1}\right)
-\Psi\left({\bf F}_n\right).
\end{equation}
Using (\ref{update_f}) we have
\begin{equation}\label{energy_3}
\delta t\frac{\partial\Psi}{\partial{\bf F}}\left({\bf F}_{n+1}\right):\nabla_{\hat{\bf x}}{\bf u}_{n+1}
\ge
\Psi\left({\bf F}_{n+1}\right)
-\Psi\left({\bf F}_n\right).
\end{equation}
which finally leads to (\ref{lemma_energy}) by integrating (\ref{energy_3}) in $\Omega_{t_0}^s$.
\end{proof}
\end{lemma}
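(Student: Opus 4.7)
The plan is to exploit the $C^1$ convexity of $\Psi$ pointwise and then use the update rule (\ref{update_f}) to turn the increment ${\bf F}_{n+1}-{\bf F}_n$ into $\delta t\,\nabla_{\hat{\bf x}}{\bf u}_{n+1}$. The key simplification that makes this almost immediate is that $E(t)$ is defined as an integral over the \emph{fixed} reference domain $\Omega_{t_0}^s$, so
\begin{equation*}
E(t_{n+1})-E(t_n)=\int_{\Omega_{t_0}^s}\bigl[\Psi({\bf F}_{n+1})-\Psi({\bf F}_n)\bigr],
\end{equation*}
with no Jacobian factor to track. Thus it suffices to establish the pointwise inequality $\Psi({\bf F}_{n+1})-\Psi({\bf F}_n)\le \delta t\,\frac{\partial\Psi}{\partial{\bf F}}({\bf F}_{n+1}):\nabla_{\hat{\bf x}}{\bf u}_{n+1}$ on $\Omega_{t_0}^s$ and then integrate.

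For the pointwise step I would invoke the standard tangent-plane characterisation of convexity: since $\Psi$ is $C^1$ and convex on second-order tensors, for any ${\bf X},{\bf Y}$ one has $\Psi({\bf X})\ge \Psi({\bf Y})+\frac{\partial\Psi}{\partial{\bf F}}({\bf Y}):({\bf X}-{\bf Y})$. Taking ${\bf Y}={\bf F}_{n+1}$ and ${\bf X}={\bf F}_n$ and rearranging gives
\begin{equation*}
\Psi({\bf F}_{n+1})-\Psi({\bf F}_n)\le \frac{\partial\Psi}{\partial{\bf F}}({\bf F}_{n+1}):({\bf F}_{n+1}-{\bf F}_n).
\end{equation*}
An equivalent but slightly more self-contained route is to introduce the scalar auxiliary function $w(s)=\Psi({\bf F}_n+s({\bf F}_{n+1}-{\bf F}_n))$ on $[0,1]$, which is convex in $s$, and to apply the tangent-above-chord inequality at $s=1$, namely $w'(1)\ge w(1)-w(0)$; this yields the same bound.

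To conclude I would substitute the update formula (\ref{update_f}), ${\bf F}_{n+1}-{\bf F}_n=\delta t\,\nabla_{\hat{\bf x}}{\bf u}_{n+1}$, into the right-hand side and integrate over $\Omega_{t_0}^s$, obtaining (\ref{lemma_energy}). I do not expect a genuine obstacle: the only thing requiring a moment of attention is getting the \emph{direction} of the convexity inequality correct (one must linearise at the later point ${\bf F}_{n+1}$, not at ${\bf F}_n$, to pick up $\partial\Psi/\partial{\bf F}$ evaluated at ${\bf F}_{n+1}$ and to have the inequality go the right way). That the rest of the argument is so clean is itself a selling point of the $\bf F$-scheme on the reference domain, as emphasised in the introduction: had the solid energy been written on the current configuration $\Omega_t^s$, one would additionally have to control the time derivative of the moving integration domain before any convexity argument could be applied.
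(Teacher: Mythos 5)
Your proposal is correct and follows essentially the same route as the paper: the pointwise convexity inequality $\frac{\partial\Psi}{\partial{\bf F}}({\bf F}_{n+1}):({\bf F}_{n+1}-{\bf F}_n)\ge\Psi({\bf F}_{n+1})-\Psi({\bf F}_n)$ (which the paper obtains via exactly the auxiliary scalar function $w(s)$ you mention as your alternative), followed by substitution of the update rule (\ref{update_f}) and integration over the fixed reference domain $\Omega_{t_0}^s$. Your remarks on the direction of linearisation and the absence of a Jacobian factor are both accurate and consistent with the paper's argument.
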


We now choose ${\bf v}={\bf u}_{n+1}^h$, $q=-p_{n+1}^h$ and ${\bf z}=0$ in equation (\ref{weak_form_final}) to deduce the stability result. Using Lemma \ref{lamma_div}, we have
\begin{equation}\label{energy_deduce_1}
\begin{split}
&\rho\int_{\Omega_{t_{n+1}}}{\bf u}_{n+1}^h\cdot{{\bf u}_{n+1}^h}
-\rho\int_{\Omega_{t_n}}{\bf u}_n^h\cdot{{\bf u}_{n+1}^h} \\
&+\delta t\rho {\mathcal{I}}\left(\xi(t)\right)
+\frac{\delta t\mu^f}{2}\int_{\Omega_{t_{n+1}}^f}{\rm D}{\bf u}_{n+1}^h:{\rm D}{\bf u}_{n+1}^h\\
&+\delta t\int_{\Omega_{t_0}^s}\frac{\partial\Psi}{\partial{\bf F}}\left({\bf F}_{n+1}\right):\nabla_{\hat{\bf x}}{\bf u}_{n+1}^h
=\delta t\int_{\Omega_{t_{n+1}}}\rho{\bf g}\cdot{\bf u}_{n+1}^h.
\end{split}
\end{equation}
Combining Lemmas \ref{lamma_f_lamma} and \ref{lamma_quadrature_lamma} we have
\begin{equation}\label{energy_deduce_2}
\|{\bf u}_{n+1}\|_{0,\Omega_{t_n}}^2
=\|{\bf u}_{n+1}\|_{0,\Omega_{t_{n+1}}}^2-\delta t{\mathcal{I}}\left(\eta\right)
=\|{\bf u}_{n+1}\|_{0,\Omega_{t_{n+1}}}^2+\delta t{\mathcal{I}}\left(\xi\right).
\end{equation}
Substituting equation (\ref{energy_deduce_2}) into the following estimate
\begin{equation}
\begin{split}
&\int_{\Omega_{t_{n}}}{\bf u}_{n}^h\cdot{\bf u}_{n+1}^h
\le
\|{\bf u}_n^h\|_{0,\Omega_{t_{n}}}\|{\bf u}_{n+1}^h\|_{0,\Omega_{t_{n}}}\\
&\le
\frac{1}{2}\left(\|{\bf u}_n^h\|_{0,\Omega_{t_{n}}}^2+\|{\bf u}_{n+1}^h\|_{0,\Omega_{t_{n}}}^2\right),
\end{split}
\end{equation}
we get
\begin{equation}\label{energy_deduce_3}
\int_{\Omega_{t_{n}}}{\bf u}_{n}^h\cdot{\bf u}_{n+1}^h
\le
\frac{1}{2}\left(\|{\bf u}_n^h\|_{0,\Omega_{t_{n}}}^2+\|{\bf u}_{n+1}^h\|_{0,\Omega_{t_{n+1}}}^2+\delta t\mathcal{I}(\xi)\right).
\end{equation}
Combining (\ref{energy_deduce_1}) and (\ref{energy_deduce_3}), and thanks to Lemma \ref{lamma_phi} the energy stability result reads: 
\begin{proposition} [Energy non-increasing] \label{lec_backward_Euler_space}
Let $\left({\bf u}_{n+1}^h, p_{n+1}^h,{\bf w}_{n+1}^h\right)$ be the solution of Problem \ref{problem_weak_time}, if there is no body force, then
\begin{equation}\label{energy_estimate_after_time_discretization_backward_Euler_space}
\begin{split}
&\frac{\rho}{2}\|{\bf u}_{n+1}^h\|_{0,\Omega_{t_{n+1}}}^2
+E\left(t_{n+1}\right)  
+\frac{\delta t\mu^f}{2}\sum_{k=1}^{n+1}\int_{\Omega_{t_k}^f}{\rm D}{\bf u}_k^h:{\rm D}{\bf u}_k^hd{\bf x}\\
&\le \frac{\rho}{2}\|{\bf u}_{n}^h\|_{0,\Omega_{t_{n}}}^2
+E\left(t_{n}\right)  
+\frac{\delta t\mu^f}{2}\sum_{k=1}^{n}\int_{\Omega_{t_k}^f}{\rm D}{\bf u}_k^h:{\rm D}{\bf u}_k^hd{\bf x}.
\end{split}
\end{equation}
\end{proposition}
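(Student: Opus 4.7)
The plan is to combine the four lemmas by testing the weak form (\ref{weak_form_final}) with ${\bf v}={\bf u}_{n+1}^h$, $q=-p_{n+1}^h$ and ${\bf z}={\bf 0}$. The excerpt effectively carries out this test already: ${\bf z}={\bf 0}$ deactivates the pseudo-elastic mesh equation, $q=-p_{n+1}^h$ eliminates the pressure-incompressibility pair $\int p\nabla\cdot{\bf v}-\int q\nabla\cdot{\bf u}$, and Lemma \ref{lamma_div} annihilates the fluid convective contribution $\rho\int({\bf u}_{n+1}^h\cdot\nabla){\bf u}_{n+1}^h\cdot{\bf u}_{n+1}^h$. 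What survives is the identity (\ref{energy_deduce_1}), relating the mass terms on the two adjacent meshes, the ALE correction $\rho\delta t\,{\mathcal I}(\xi)$, the viscous dissipation and the elastic work $\delta t\int_{\Omega_{t_0}^s}\frac{\partial\Psi}{\partial{\bf F}}({\bf F}_{n+1}):\nabla_{\hat{\bf x}}{\bf u}_{n+1}^h$.

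The next step is to dispose of the mixed mass term $\rho\int_{\Omega_{t_n}}{\bf u}_n^h\cdot{\bf u}_{n+1}^h$, which lives on the \emph{previous} mesh. Cauchy--Schwarz plus AM--GM bounds it by $\tfrac12\bigl(\|{\bf u}_n^h\|_{0,\Omega_{t_n}}^2+\|{\bf u}_{n+1}^h\|_{0,\Omega_{t_n}}^2\bigr)$, and the dangerous term $\|{\bf u}_{n+1}^h\|_{0,\Omega_{t_n}}^2$ is transferred to the current mesh using Lemma \ref{lamma_quadrature_lamma} coupled with Lemma \ref{lamma_f_lamma}. This is precisely the content of (\ref{energy_deduce_2})--(\ref{energy_deduce_3}) and is the heart of the argument: the geometric correction $\delta t\,{\mathcal I}(\eta)$ generated by transporting $|{\bf u}_{n+1}^h|^2$ between $\Omega_{t_{n+1}}$ and $\Omega_{t_n}$ coincides with the ALE term $\delta t\,{\mathcal I}(\xi)$ already sitting on the left-hand side, so the two cancel exactly and no uncontrolled ${\mathcal O}(\delta t)$ residual from the moving frame is left behind.

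Finally, the elastic contribution is handled by the convexity bound of Lemma \ref{lamma_phi}: $\delta t\int_{\Omega_{t_0}^s}\frac{\partial\Psi}{\partial{\bf F}}({\bf F}_{n+1}):\nabla_{\hat{\bf x}}{\bf u}_{n+1}^h\ge E(t_{n+1})-E(t_n)$. Substituting this lower bound together with (\ref{energy_deduce_3}) into (\ref{energy_deduce_1}) and setting ${\bf g}={\bf 0}$ yields the one-step estimate
\[
\frac{\rho}{2}\|{\bf u}_{n+1}^h\|_{0,\Omega_{t_{n+1}}}^2
+E(t_{n+1})
+\frac{\delta t\mu^f}{2}\int_{\Omega_{t_{n+1}}^f}{\rm D}{\bf u}_{n+1}^h:{\rm D}{\bf u}_{n+1}^h
\le
\frac{\rho}{2}\|{\bf u}_n^h\|_{0,\Omega_{t_n}}^2
+E(t_n),
\]
and adding the telescoping viscous sum $\frac{\delta t\mu^f}{2}\sum_{k=1}^{n}\int_{\Omega_{t_k}^f}{\rm D}{\bf u}_k^h:{\rm D}{\bf u}_k^h$ to both sides produces (\ref{energy_estimate_after_time_discretization_backward_Euler_space}).

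The main obstacle is entirely encapsulated in Lemma \ref{lamma_quadrature_lamma}: without the exact quadrature (mid-point when $d=2$, where $J_{{\mathcal A}_{t_n,t}}$ is linear in $t$, and Simpson when $d=3$, where it is quadratic) the ALE correction would fail to cancel cleanly and one would obtain only conditional stability. Once that lemma is available, the remainder of the proposition is mechanical bookkeeping built on top of Lemmas \ref{lamma_div}, \ref{lamma_f_lamma} and \ref{lamma_phi}.
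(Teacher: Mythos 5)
Your proposal is correct and follows essentially the same route as the paper: the same test functions $({\bf v},q,{\bf z})=({\bf u}_{n+1}^h,-p_{n+1}^h,{\bf 0})$, Lemma \ref{lamma_div} for the convective term, Cauchy--Schwarz plus Young's inequality for the mixed mass term transferred back to the current mesh via Lemmas \ref{lamma_f_lamma} and \ref{lamma_quadrature_lamma}, the convexity bound of Lemma \ref{lamma_phi} for the elastic energy, and a telescoping viscous sum. The only bookkeeping point worth checking is the constant relating $\mathcal{I}(\eta)$ and $\mathcal{I}(\xi)$: since Lemma \ref{lamma_f_lamma} gives $\xi=-\eta/2$, the transfer identity reads $\|{\bf u}_{n+1}^h\|_{0,\Omega_{t_n}}^2=\|{\bf u}_{n+1}^h\|_{0,\Omega_{t_{n+1}}}^2+2\delta t\,\mathcal{I}(\xi)$, which is precisely what makes the ALE term cancel exactly after the factor $1/2$ from Young's inequality, as you assert.
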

The above estimate indicate that the total energy, including kinetic energy, potential energy and the viscous dissipation, of the FSI system is non-increasing.

\section{Implementation: {\bf F}-scheme and {\bf d}-scheme}
\label{implementation}
In this section, we focus on the implementation of a specific solid model, which determines the following term
\begin{equation}\label{solid_term}
\int_{\Omega_{t_0}^s}\frac{\partial\Psi}{\partial{\bf F}}\left({\bf F}_{n+1}\right):\nabla_{\hat{\bf x}}{{\bf v}}
\end{equation}
in equation (\ref{weak_form_final}). We consider an incompressible neo-Hookean solid model with the energy function $\Psi$ being given as follows \cite{Hesch_2014}:
\begin{equation}\label{energy_funciton_neo_hookean}
\Psi\left({\bf F}\right)=\frac{c_1}{2}\left[tr\left({\bf F}{\bf F}^T\right)-d-2ln\left(J_{\mathcal{F}_t}\right)\right].
\end{equation}
In order to compute the derivative of $\Psi$ with respective to ${\bf F}$, we first have
\begin{equation}\label{ff_derivative}
\begin{split}
&\left[\frac{\partial tr\left({\bf F}^T{\bf F}\right)}{\partial {\bf F}}\right]_{mn} 
=\frac{\partial tr\left(F_{ki}F_{kj}\right)}{\partial F_{mn}}
=\frac{\partial \sum_{k}^{d}\sum_{i}^{d}F_{ki}^2}{\partial F_{mn}} \\
&=\frac{\partial\left(F_{11}^2+F_{12}^2+\cdots+F_{dd}^2\right)}{\partial F_{mn}}
=2{\bf F}_{mn}.
\end{split}
\end{equation}
Let $cof(F_{ij})=(-1)^{i+1}det\left({\bf F}\right.$ without $i^{th}$ row and $j^{th}$ column$\left.\right)$ be the cofactor of $F_{ij}$. Because of $J_{\mathcal{F}_t}=\sum_{k}^{d}F_{ik}cof\left(F_{ik}\right)$, we have 
$
\frac{\partial J_{\mathcal{F}_t}}{\partial F_{ij}}
=cof\left(F_{ij}\right)
$, i.e,
\begin{equation}\label{jacobi_special_case}
\frac{\partial J_{\mathcal{F}_t}}{\partial {\bf F}}
=cof\left({\bf F}\right)
=J_{\mathcal{F}_t}{\bf F}^{-T}.
\end{equation}
Combining equations (\ref{ff_derivative}) and (\ref{jacobi_special_case}) gives
\begin{equation}
\frac{\partial{\Psi}}{\partial{\bf F}}
=c_1\left({\bf F}-{\bf F}^{-T}\right).
\end{equation}
Using formula (\ref{update_f}), the term (\ref{solid_term}) can then be expressed as:
\begin{equation}\label{f_formulation}
\begin{split}
&\int_{\Omega_{t_0}^s}\frac{\partial{\Psi}}{\partial{\bf F}}\left({\bf F}_{n+1}\right):\nabla_{\hat{\bf x}}{{\bf v}}
=c_1\int_{\Omega_{t_0}^s}\left({\bf F}_{n+1}-{\bf F}_{n+1}^{-T}\right):\nabla_{\hat{\bf x}}{{\bf v}}\\
&=c_1\int_{\Omega_{t_0}^s}{\bf F}_{n+1}:\nabla_{\hat{\bf x}}{{\bf v}}
-c_1\int_{\Omega_{t_{n+1}}^s}J_{\mathcal{F}_t}^{-1}\nabla\cdot{{\bf v}}\\
&=c_1\delta t\int_{\Omega_{t_0}^s}\nabla_{\hat{\bf x}}{\bf u}_{n+1}:\nabla_{\hat{\bf x}}{\bf v}
+c_1\int_{\Omega_{t_0}^s}{\bf F}_n:\nabla_{\hat{\bf x}}{{\bf v}}
-c_1\int_{\Omega_{t_{n+1}}^s}J_{\mathcal{F}_t}^{-1}\nabla\cdot{{\bf v}}.
\end{split}
\end{equation}

In the above we update the solid deformation tensor ${\bf F}$ and integrate in the initial configuration, and we call this the ${\bf F}$-scheme. We can also express the stress in terms of displacement ${\bf d}$ and integrate in the current configuration as introduced in \cite{Hecht_2017}, which is called the ${\bf d}$-scheme. To deduce the ${\bf d}$-scheme, we first transform the term (\ref{solid_term}) to be integrated in the current domain:
\begin{equation}
\int_{\Omega_{t_0}^s}\frac{\partial\Psi}{\partial{\bf F}}\left({\bf F}_{n+1}\right):\nabla_{\hat{\bf x}}{{\bf v}}
=\int_{\Omega_{t_{n+1}}^s}J_{\mathcal{F}_t}^{-1}\frac{\partial\Psi}{\partial{\bf F}}{\bf F}^T:\nabla{\bf v}
=\int_{\Omega_{t_{n+1}}^s}{\bm{\tau}^s}:\nabla{\bf v},
\end{equation}
where
\begin{equation}\label{neo_hookean_tau}
{\bm\tau^s}
=c_1J_{\mathcal{F}_t}^{-1}\left({\bf B}-{\bf I}\right)
\end{equation}
is the deviatoric stress tensor, with ${\bf B}={\bf F}{\bf F}^T$.

Let us only consider a two dimensional case, readers may refer to \cite{chiang2017numerical} for the three dimensional case. According to the Cayley-Hamilton theorem, ${\bf B}$ satisfies its characteristic equation:
\begin{equation}
{\bf B}^2-tr_{\bf B}{\bf B}+J_{\mathcal{F}_t}^2{\bf I}=0,
\end{equation}
from which we immediately have:
\begin{equation}\label{expression_b}
{\bf B}=tr_{\bf B}{\bf I}-J_{\mathcal{F}_t}^2{\bf B}^{-1}.
\end{equation}
Since
\begin{equation}
{\bf F}={\nabla_{\hat{\bf x}}{\bf x}}={\nabla_{\hat{\bf x}}({\hat{\bf x}}+{\bf d})}={\bf I}+{\bf F}\nabla{\bf d},
\end{equation}
we also have:
\begin{equation}\label{f_inverse}
{\bf F}^{-1}={\bf I}-\nabla{\bf d}.
\end{equation}
Substituting (\ref{expression_b}) and (\ref{f_inverse}) into (\ref{neo_hookean_tau}), ${\bm\tau}^s$ can be expressed by displacement as follows:
\begin{equation}
{\bm\tau}^s=-c_1J_{\mathcal{F}_t}\left({\bf I}-\nabla{\bf d}\right)^T\left({\bf I}-\nabla{\bf d}\right)
+c_1J_{\mathcal{F}_t}^{-1}\left(tr_{\bf B}-1\right){\bf I},
\end{equation}
which can further be written as
\begin{equation}\label{solid_stress_d}
{\bm\tau}^s=c_1J_{\mathcal{F}_t}\left({\rm D}{\bf d}-\nabla^{\rm T}{\bf d}\nabla{\bf d}\right)+\bar{p}{\bf I},
\end{equation}
where $\bar{p}=c_1J_{\mathcal{F}_t}^{-1}\left(tr_{\bf B}-1\right)-c_1J_{\mathcal{F}_t}$ will be integrated into the solid pressure $p$ in (\ref{constitutive_model}) as an unknown. Similarly to the update of ${\bf F}$ in (\ref{update_f}), updating the displacement by
\begin{equation}
{\bf d}_{n+1}=\tilde{\bf d}_n+\delta t{\bf u}_{n+1}, \quad \tilde{\bf d}_n={\bf d}_n\circ\mathcal{A}_{t_n,t_{n+1}}^{-1},
\end{equation}
leads to the computation of term (\ref{solid_term}) as follows:
\begin{equation}
\begin{split}
&\int_{\Omega_{t_0}^s}\frac{\partial\Psi}{\partial{\bf F}}\left({\bf F}_{n+1}\right):\nabla_{\hat{\bf x}}{{\bf v}}
=\int_{\Omega_{t_{n+1}}^s}{\bm{\tau}^s}:\nabla{\bf v}\\
&=c_1\int_{\Omega_{t_{n+1}}^s}\left({\rm D}{\bf d}_{n+1}-\nabla^{\rm T}{\bf d}_{n+1}\nabla{\bf d}_{n+1}\right):\nabla{\bf v}\\
&=\frac{c_1\delta t}{2}\int_{\Omega_{t_{n+1}}^s}{\rm D}{\bf u}_{n+1}:{\rm D}{\bf v}
+\frac{c_1}{2}\int_{\Omega_{t_{n+1}}^s}{\rm D}\tilde{\bf d}_n:{\rm D}{\bf v}  \\
&-\delta tc_1\int_{\Omega_{t_{n+1}}^s}\left(\nabla^{\rm T}{\bf u}_{n+1}\nabla\tilde{\bf d}_n+\nabla^{\rm T}\tilde{\bf d}_n\nabla{\bf u}_{n+1}\right):\nabla{\bf v}\\
&-c_1\int_{\Omega_{t_{n+1}}^s}\nabla^{\rm T}\tilde{\bf d}_n\nabla\tilde{\bf d}_n:\nabla{\bf v}.
\end{split}
\end{equation}
Note that in the above, the second order term $O\left(\delta t^2\right)$ is neglected and $J_{\mathcal{F}_t}$ is replaced by $1$.
This is justified through observations from numerical simulation \cite{Hecht_2017}.

\begin{remark}
The two and three dimensional ${\bf F}$-scheme have exactly the same formulations. This can been seen from equation (\ref{f_formulation}), which does not depend on dimensions. However the formulation of ${\bf d}$-scheme depends on the Cayley-Hamilton theorem, which is different in two and three dimensions, and consequently leads to significant complexity of the ${\bf d}$-scheme in three dimension \cite{chiang2017numerical}.
\end{remark}

\section{Numerical experiments}
\label{sec_numerical_exs}
In this section, we validate the proposed numerical scheme through a selection of benchmarks in the FSI area. We shall use the Taylor-Hood elements for the velocity-pressure pair. We validate the energy stability expressed by (\ref{energy_estimate_after_time_discretization_backward_Euler_space}) in Section \ref{sec_oscillating}. We validate the proposed scheme against a FSI problem with a semi-analytic solution in Section \ref{sec_rotating_disc}. Time and mesh convergence tests are carried out in Section \ref{sec_flag}, and an example with very large solid deformation is tested in Section \ref{sec_fallingdisc}. The {\bf F}-scheme will be adopted in all the following numerical tests. In addition, the {\bf d}-scheme is also implemented for tests in Section \ref{sec_oscillating} and \ref{sec_fallingdisc} in order to compare the two schemes.

\subsection{Oscillating disc}
\label{sec_oscillating}
In this test, we consider an enclosed flow (${\bf n}\cdot{\bf u}=0$) in $\Omega=[0,1]\times[0,1]$ with a periodic boundary condition. A solid disc is initially located in the middle of the square $\Omega$ and has a radius of $0.2$. The initial velocity of the fluid and solid are prescribed by the following stream function
\begin{equation*}
\Psi=\Psi_0{\rm sin}(ax){\rm sin}(by),
\end{equation*}
where $\Psi_0=5.0\times10^{-2}$ and $a=b=2\pi$. In this test, $\rho^f=1$, $\mu^f=0.01$, $\rho^s=1.5$ and $c_1=1$. A mesh size of 3217 elements with 13081 nodes is used in this test. In order to visualize the flow a snapshot ($t=0.25$) of the velocity and pressure field are presented in Figure \ref{snapshot_of_fluid}, and the evolution of energy is presented in Figure \ref{energy_evolution} and \ref{total_energy} from which we can observe the property of non-increasing total energy as proved in Proposition \ref{lec_backward_Euler_space}. 

The {\bf F}-scheme and {\bf d}-scheme are compared using this example and we have not found any significant difference by comparing the solid deformation as shown in Figure \ref{compare_d_f}.

\begin{figure}[h!]
	\begin{minipage}[t]{0.5\linewidth}
		\centering  
		\includegraphics[width=2.2in,angle=0]{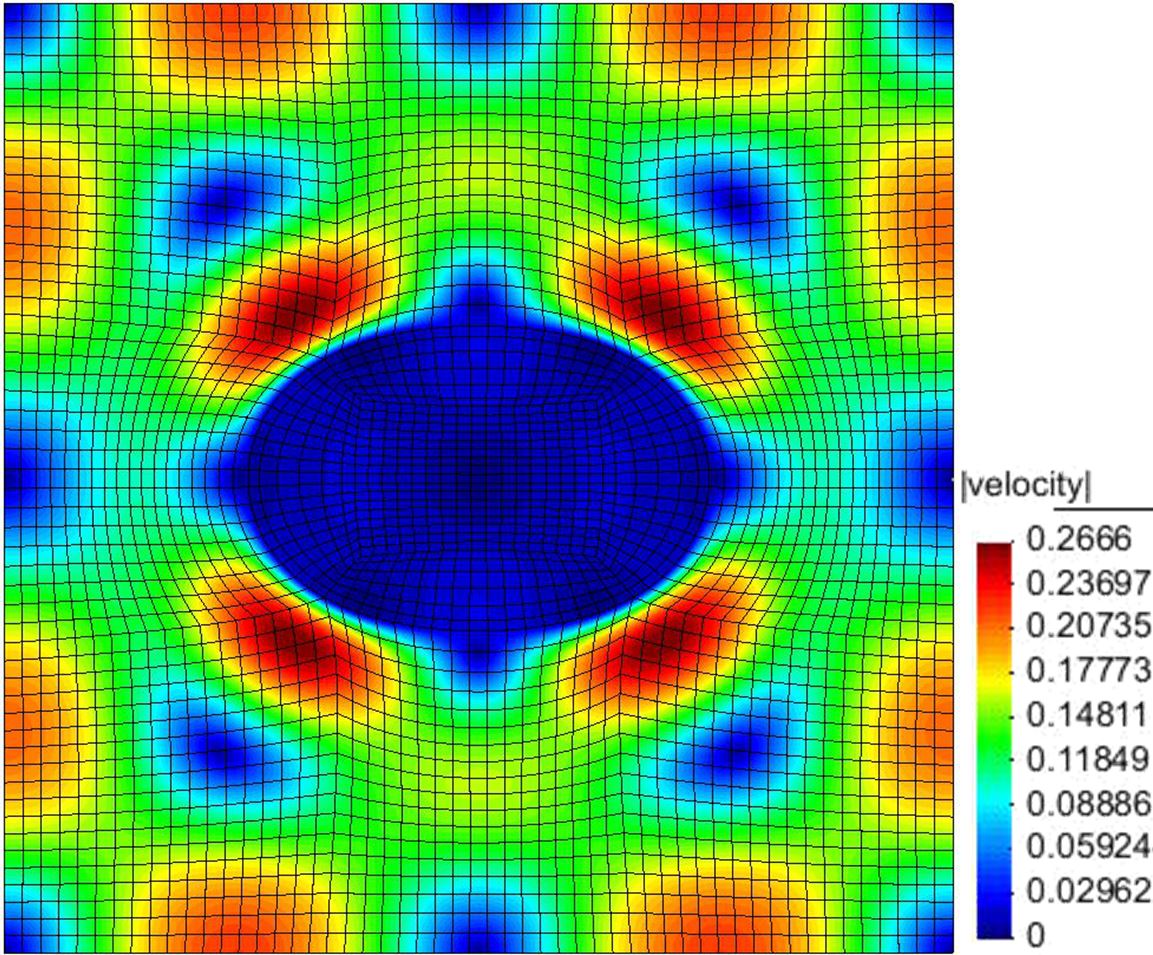}	
		\captionsetup{justification=centering}
		\caption*{\scriptsize(a) Velocity norm.}
	\end{minipage}
	\begin{minipage}[t]{0.5\linewidth}
		\centering  
		\includegraphics[width=2.3in,angle=0]{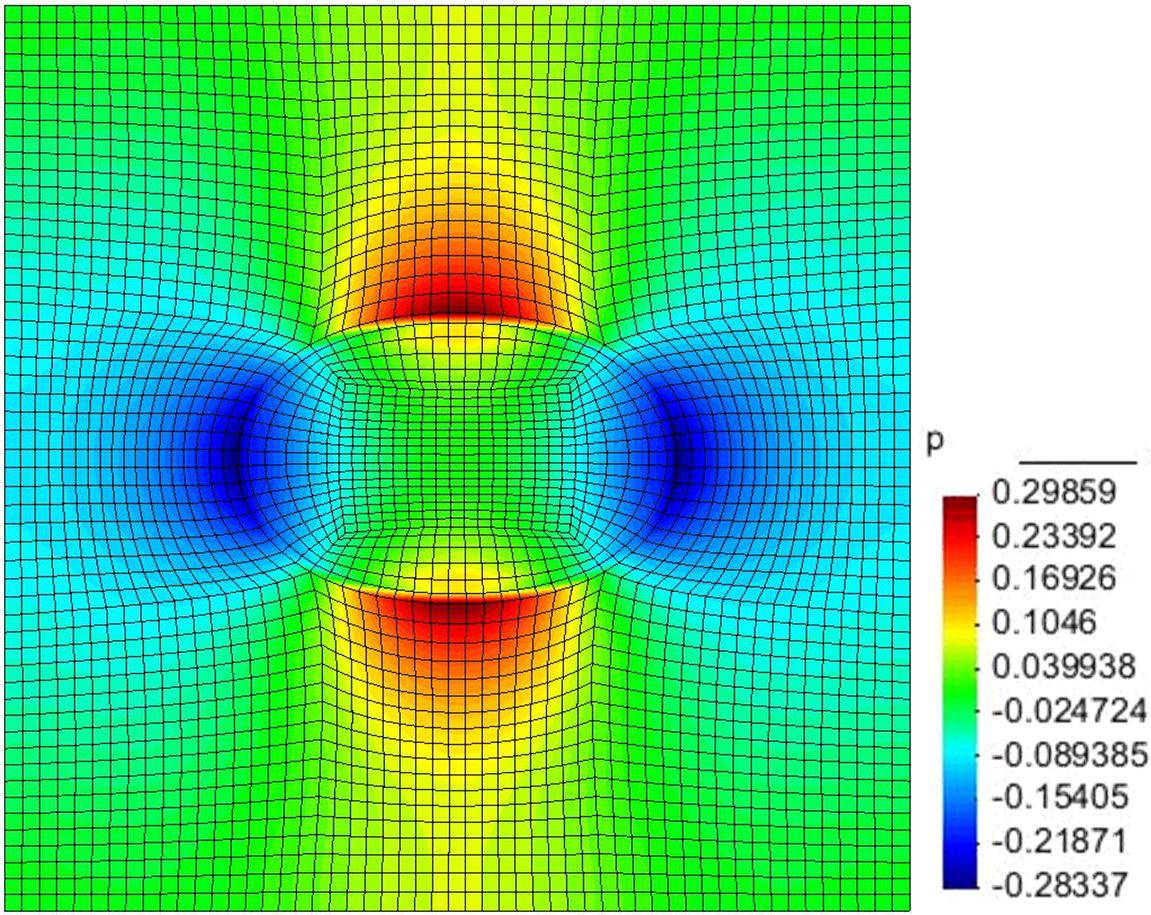}	
		\captionsetup{justification=centering}
		\caption*{\scriptsize(b) Pressure.}
	\end{minipage}     		
	\captionsetup{justification=centering}
	\caption {\scriptsize Snapshot of the oscillating disc at $t=0.25$ when the disc is maximally stretched, using a time step of $\Delta t=0.01$.} 
	\label{snapshot_of_fluid}
\end{figure}

\begin{figure}[h!]
	\centering  
	\includegraphics[width=2.7in,angle=0]{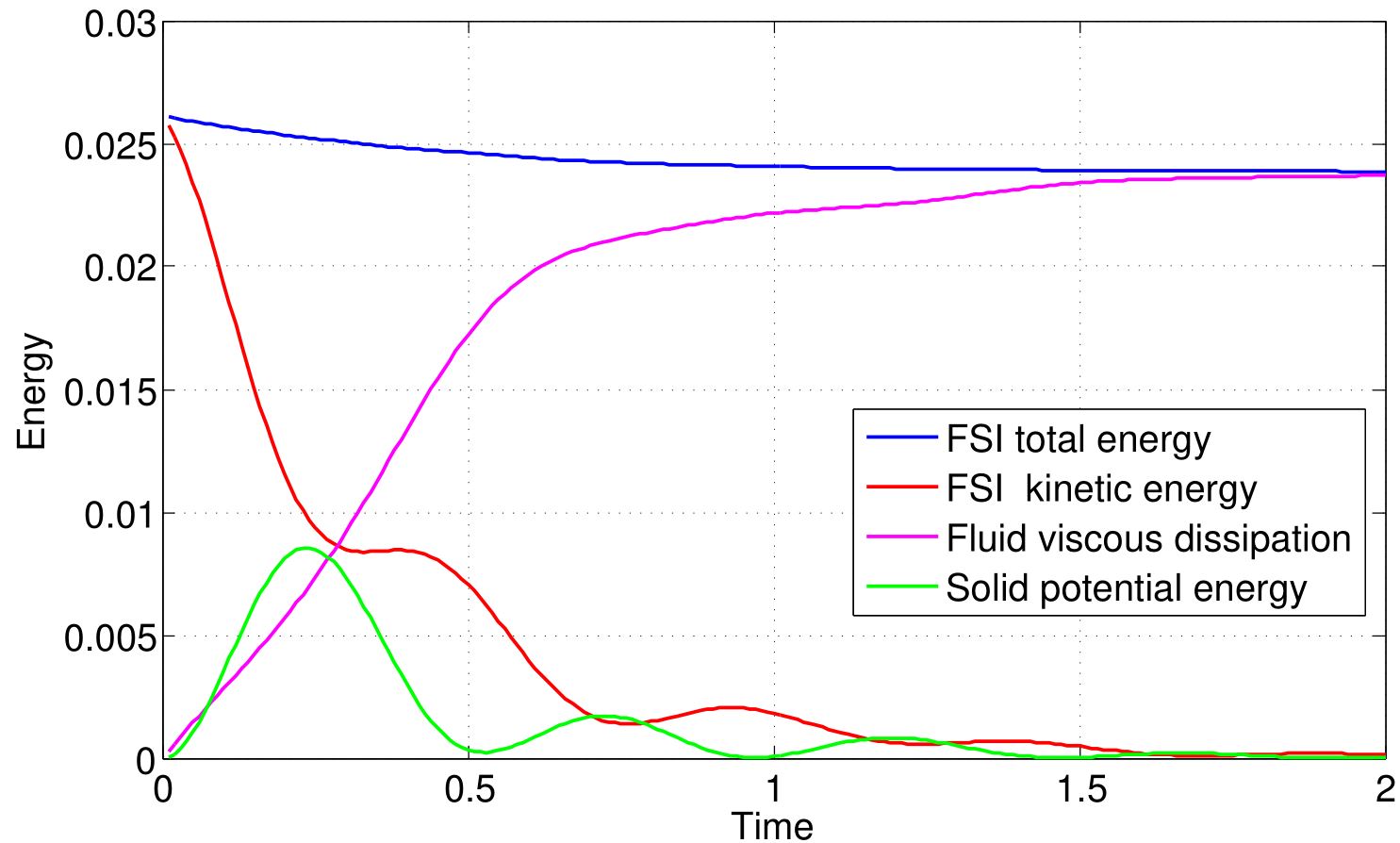}	    		
	\captionsetup{justification=centering}
	\caption {\scriptsize Evolution of energy for the oscillating disc using $\Delta t=0.01$. The peaks of the green curve indicate the time when the disc is maximally stretched. The first peak is horizontally stretched and the second peak is vertically stretched. The troughs of the green curve are the stress-free stages.} 
	\label{energy_evolution}
\end{figure}

\begin{figure}[h!]
\centering  
\includegraphics[width=2.7in,angle=0]{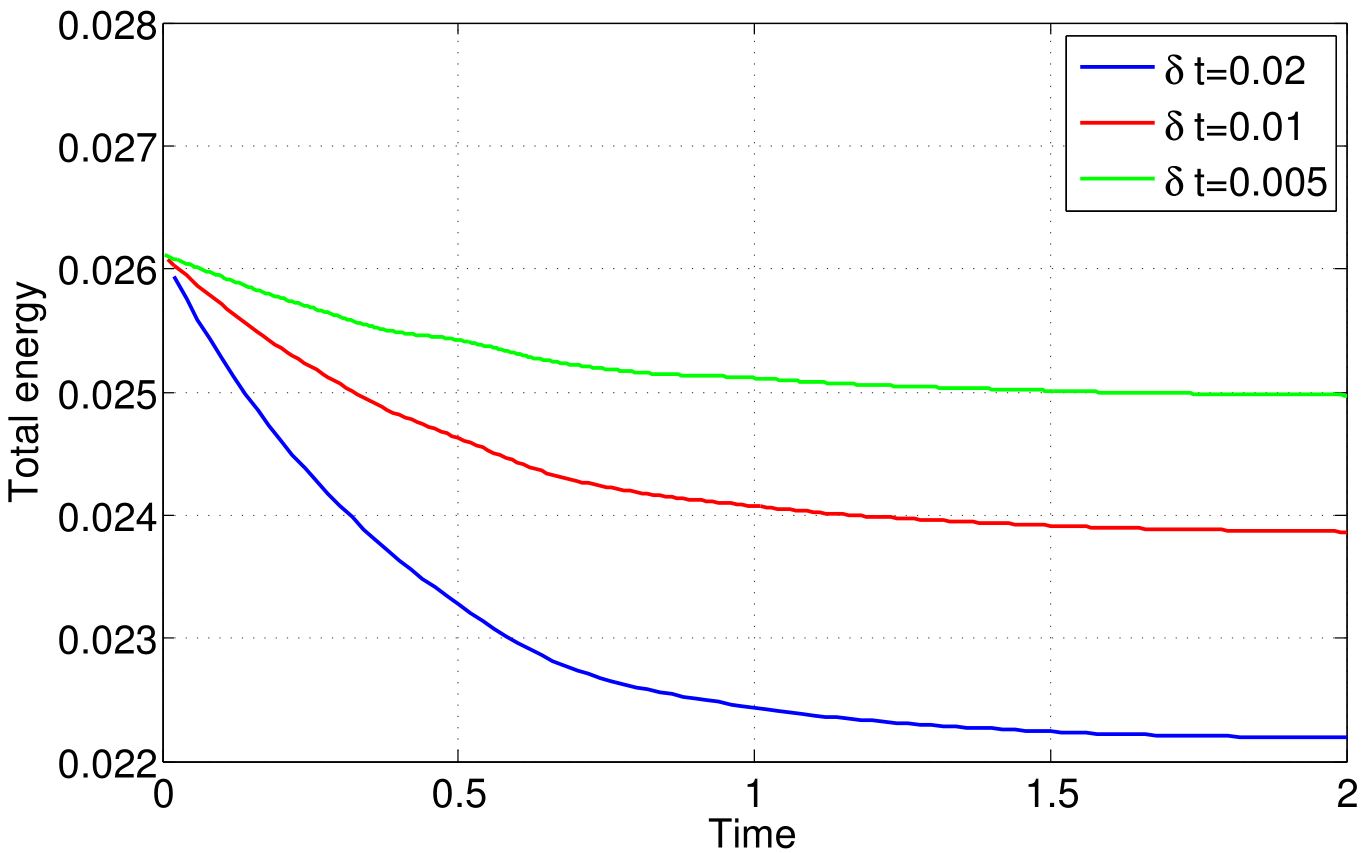}	    		
\captionsetup{justification=centering}
\caption {\scriptsize Evolution of total energy for the oscillating disc.} 
\label{total_energy}
\end{figure}

\begin{figure}[h!]
	\centering  
	\includegraphics[width=2.3in,angle=0]{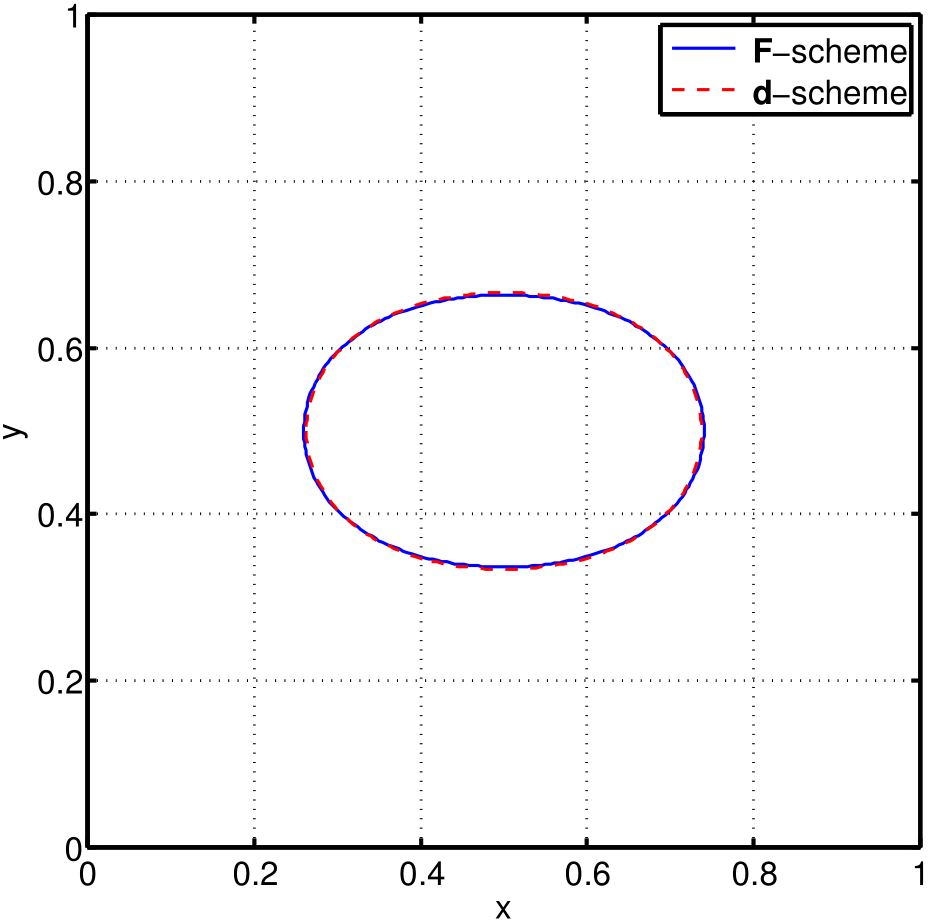}	    		
	\captionsetup{justification=centering}
	\caption {\scriptsize Comparison of disc shape for the {\bf F}-scheme and {\bf d}-scheme at $t=0.25$ when the disc is maximally stretched.} 
	\label{compare_d_f}
\end{figure}

\subsection{Rotating disc}
\label{sec_rotating_disc}
This test is taken from \cite{Hecht_2017}. The computational domain is the area between two concentric circles ($R_0$ and $R_1$) as shown in Figure \ref{rotaging_disc}, with fluid and solid properties as $\rho^f=1,\rho^s=2,\mu^f=2$ and $c_1=4$ . A constant angular velocity ($\omega=U/R_1=0.6$) is prescribed at the outer boundary. This velocity first induces the fluid, that is initially at rest, to rotate and then gradually drags the solid to rotate as well. Using the property of symmetry, this problem can be reduced to a one-dimensional equation when considered in a polar coordinate system ($r$, $\theta$) \cite{Hecht_2017}:
\begin{equation}\label{readuced_oned_rotating1}
\rho^f\frac{\partial u_\theta}{\partial t}
=\frac{\mu^f}{r}\frac{\partial}{\partial r}\left(r\frac{\partial u_\theta}{\partial r}\right)
-\mu^f\frac{u_{\theta}}{r^2}, \quad  \quad\quad R\le r<R_1
\end{equation}
and
\begin{equation}\label{readuced_oned_rotating2}
	\rho^s\frac{\partial u_\theta}{\partial t}
	=\frac{c_1}{r}\frac{\partial}{\partial r}\left(r\frac{\partial d_\theta}{\partial r}\right)
	-c_1\frac{d_{\theta}}{r^2}, \quad \frac{\partial d_\theta}{\partial t}=u_\theta, \quad\quad R_0< r \le R,
\end{equation}
where $u_r$ and $u_\theta$ are the velocity components in the radial and tangential directions respectively. This one-dimensional problem (\ref{readuced_oned_rotating1}) and (\ref{readuced_oned_rotating2}) can be solved to high accuracy, and the solution is plotted in Figure \ref{rotating_disc_oneD_results} using 200 linear elements and $\Delta t=1.0\times 10^{-3}$. Using the same time step, which is stable, the proposed method can produce results of similar accuracy to the semi-analytic solution (see Figure \ref{rotating_disc_solution}). We use three different meshes to test convergence of the proposed algorithm. A coarse mesh equally divides the radial direction of the computational domain into 4 segments, and equally divides the tangential direction into to 40 segments, which therefore has $4\times40=160$ biquadratic elements. The medium and fine mesh are refined based on the coarse mesh, which have $8\times80=640$ and $16\times160=2560$ elements respectively. Due to the discontinuity in the derivative at the fluid-solid interface, we only achieve an $O(h)$ convergence as shown in Figure \ref{rotating_disc_err}, where $h$ is the mesh size. This observation is consistent with the result in \cite{Hecht_2017}.

\begin{figure}[h!]
	\centering
	\includegraphics[width=3 in,angle=0]{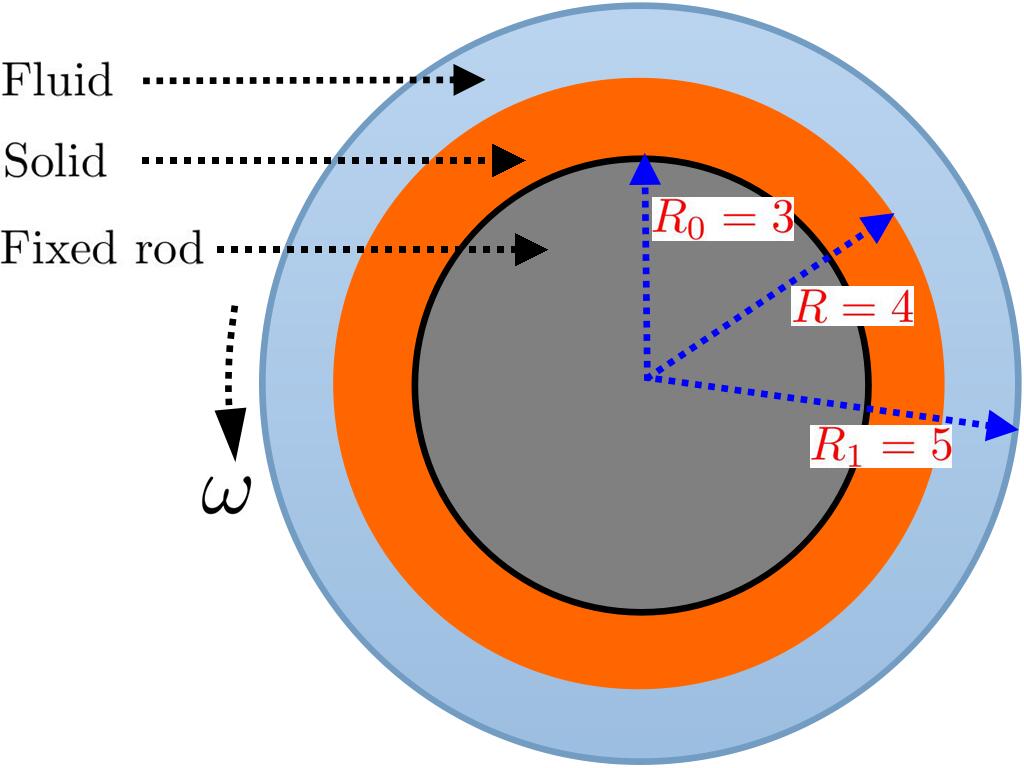}
	\captionsetup{justification=centering}
	\caption {\scriptsize Sketch of a rotating disc in Section \ref{sec_rotating_disc}.} 
	\label{rotaging_disc}
\end{figure}

\begin{figure}[h!]
	\centering  
	\includegraphics[width=2.8in,angle=0]{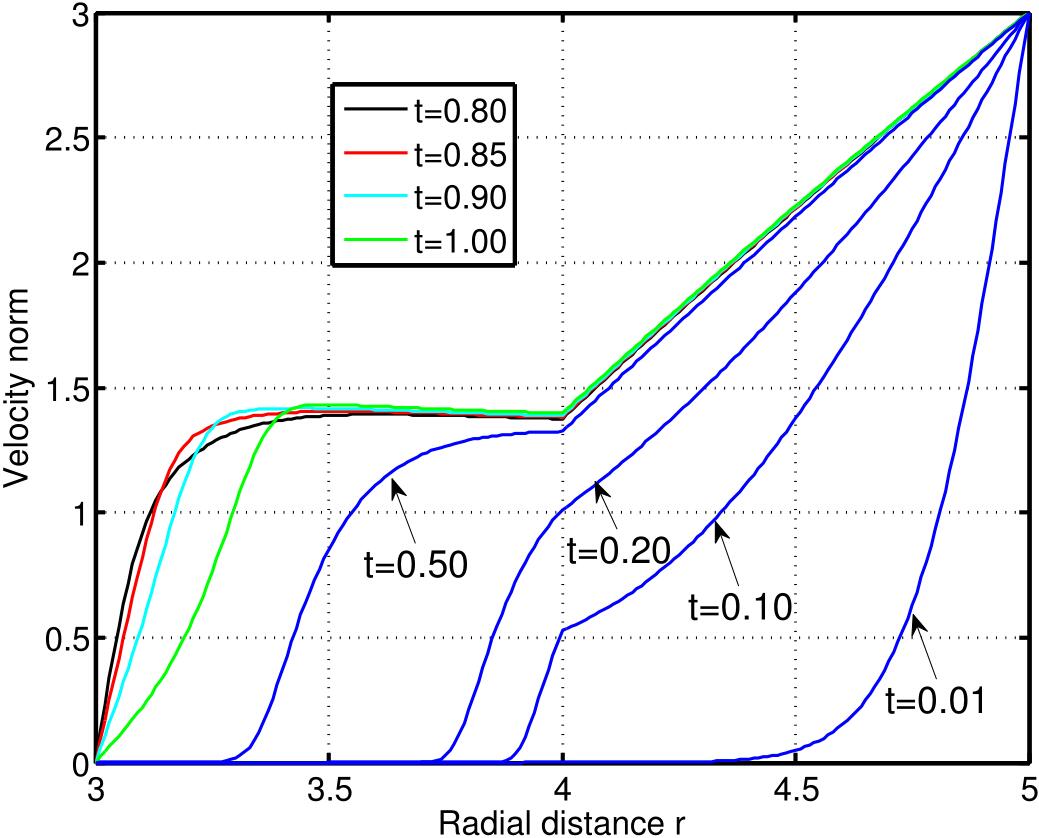}
	\captionsetup{justification=centering}	
	\caption {\scriptsize Evolution of the velocity norm for the reduced one-dimensional rotating disc.} 
	\label{rotating_disc_oneD_results}
\end{figure}

\begin{figure}[h!]
	\centering
	\includegraphics[width=2.8in,angle=0]{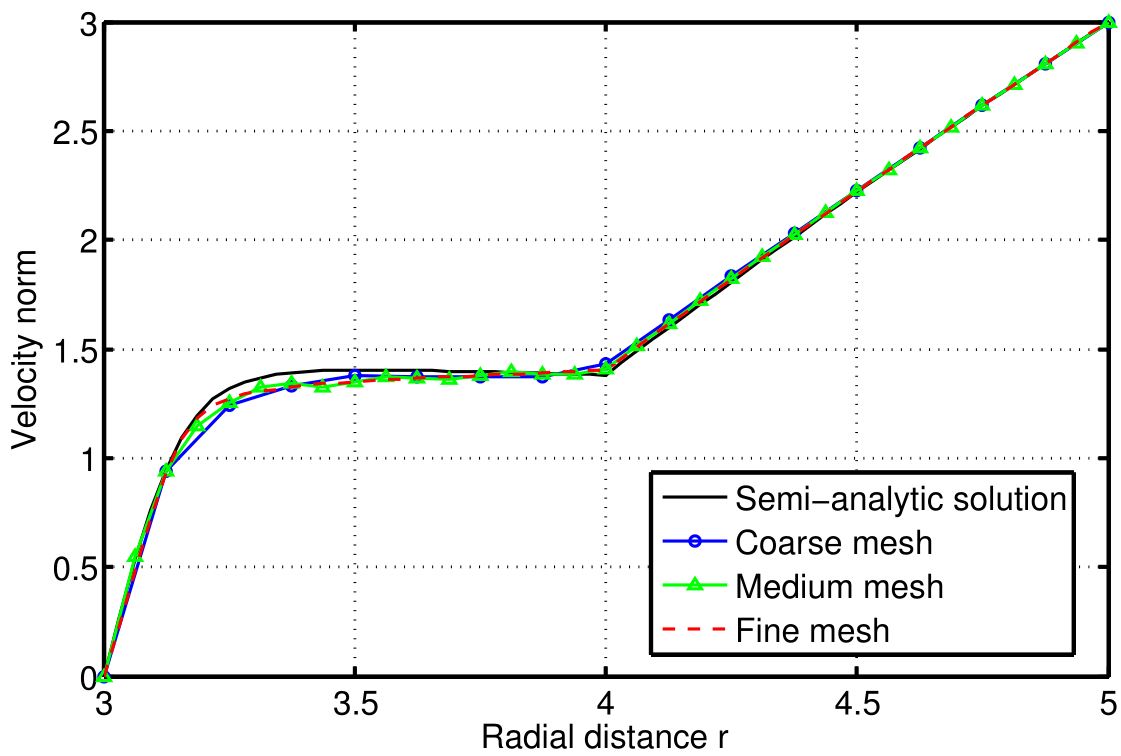}
	\captionsetup{justification=centering}	
	\caption {\scriptsize Comparison between the proposed approach and the semi-analytic solution at $t=0.85$ when the solid is maximally deformed.} 
	\label{rotating_disc_solution}	
\end{figure}

\begin{figure}[h!]
	\centering
	\includegraphics[width=2.8in,angle=0]{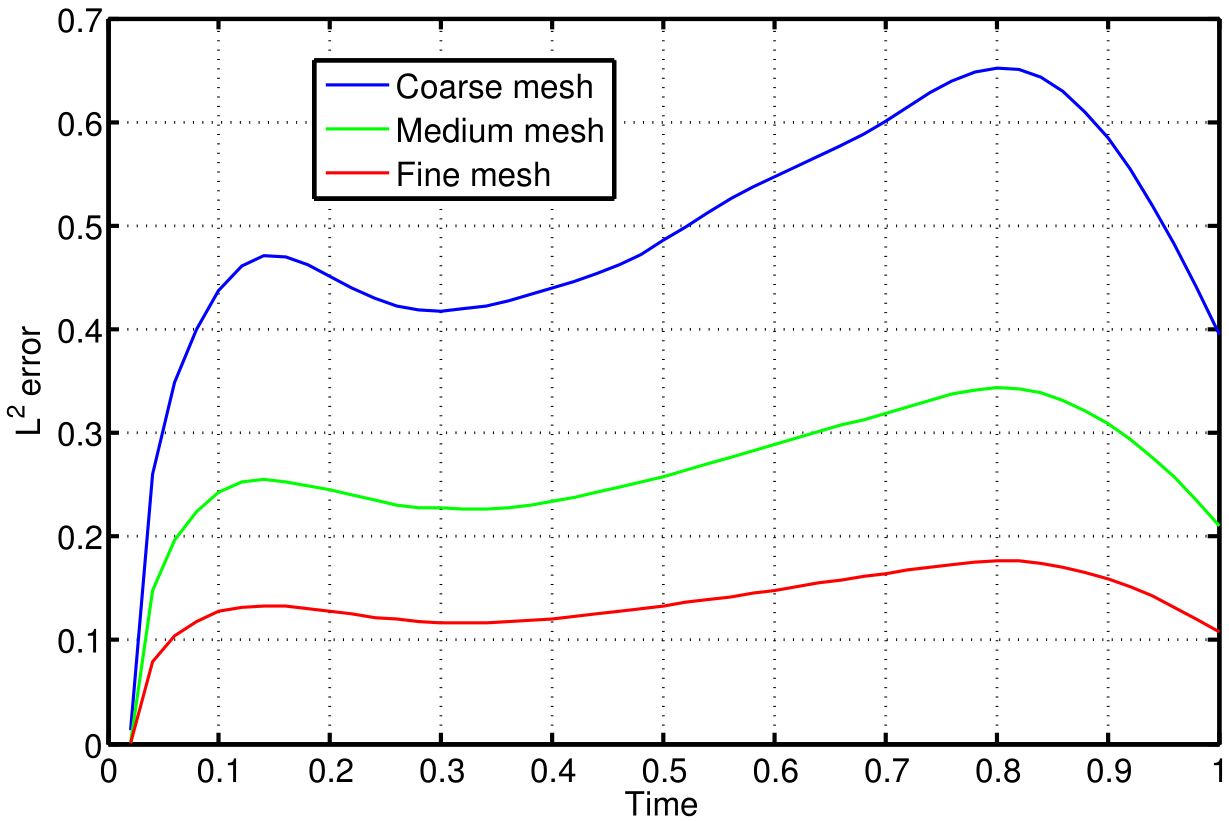}
	\captionsetup{justification=centering}	
	\caption {\scriptsize Convergence of $L^2$ error.} 
	\label{rotating_disc_err}	
\end{figure}

\subsection{Oscillating flag}
\label{sec_flag}
In this section, we consider an oscillating flag attached to a cylinder, which was firstly proposed in \cite{turek2006proposal} (name FSI3), and been regarded as a challenging numerical test in the FSI field. We test the time and mesh convergence for the proposed FSI method. The computational domain is a rectangle ($L\times H$) with a cut hole of radius $r$ and center $(c,c)$ as shown in Figure \ref{thick_flag_sketch}. A leaflet of size $l\times h$ is attached to the boundary of the hole (the mesh of the leaflet is fitted to the boundary of the hole, see the solid mesh in Figure \ref{flag}). In this test, $L=2.5$, $H=0.41$, $l=0.35$, $h=0.02$, $c=0.2$ and $r=0.05$. The fluid and solid parameters are as follows: $\rho^f=\rho^s=10^3$, $\mu^f=1$ and $c_1=2.0\times 10^6$. The inlet flow is prescribed as:
\begin{equation}
	\bar{u}_x=\frac{12y}{H^2}\left(H-y\right),\quad \bar{u}_y=0.
\end{equation}

\begin{figure}[h!]
	\centering  
	\includegraphics[width=4.0in,angle=0]{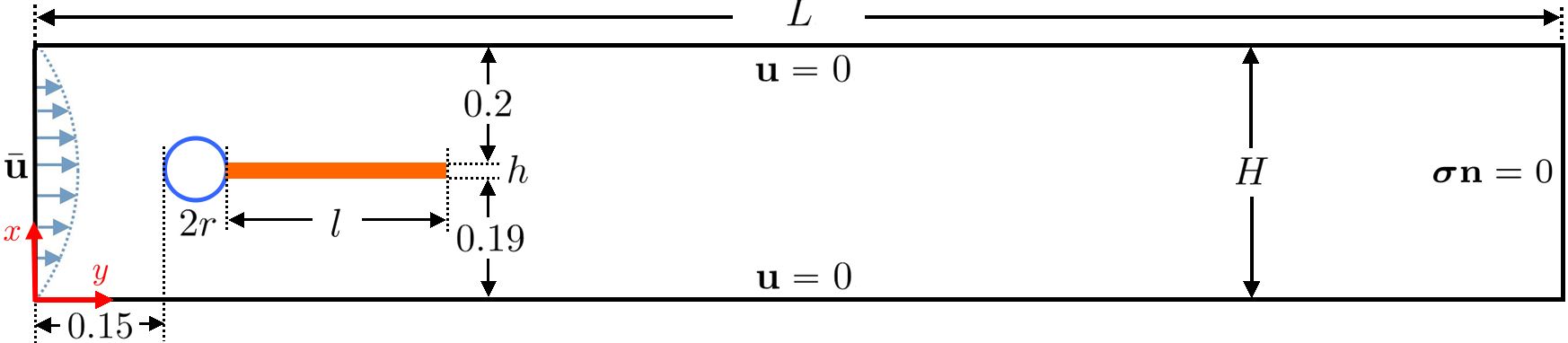}			
	\captionsetup{justification=centering}
	\caption {\scriptsize Computational domain and boundary conditions for the oscillating flag.} 
	\label{thick_flag_sketch}
\end{figure}

A wall boundary condition and the outlet flow condition are displayed in Figure \ref{thick_flag_sketch}. A coarse mesh has $10054$ nodes and $2448$ biquadratic elements as shown in Figure \ref{flag}, and a medium and fine mesh have 33746 nodes (8320 elements) and 68974 nodes (17081 elements) respectively. We study the oscillating frequency and amplitude at the tip of the flag. The convergence with respect to time and space are displayed in Figure \ref{flag_time} and Figure \ref{flag_mesh} respectively, and the frequency and amplitude of the oscillation converge to 5.26 and 0.035 respectively. These figures have a good agreement with the reference values given in \cite{turek2006proposal} with frequency and amplitude being 5.3 and 0.03438 respectively.

\begin{figure}[h!]
	\centering  
	\includegraphics[width=4.5in,angle=0]{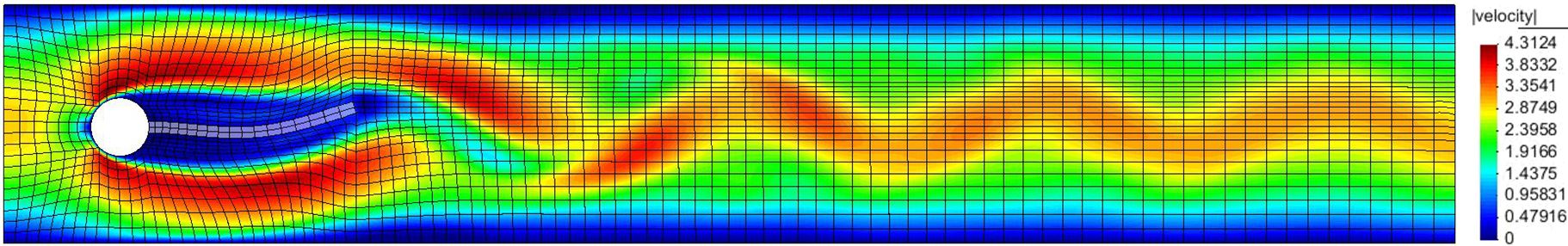}			
	\captionsetup{justification=centering}
	\caption {\scriptsize A snap shot of the velocity norms at t=6 using a coarse mesh.} 
	\label{flag}
\end{figure}

\begin{figure}[h!]
\centering  
\includegraphics[width=4.5in,angle=0]{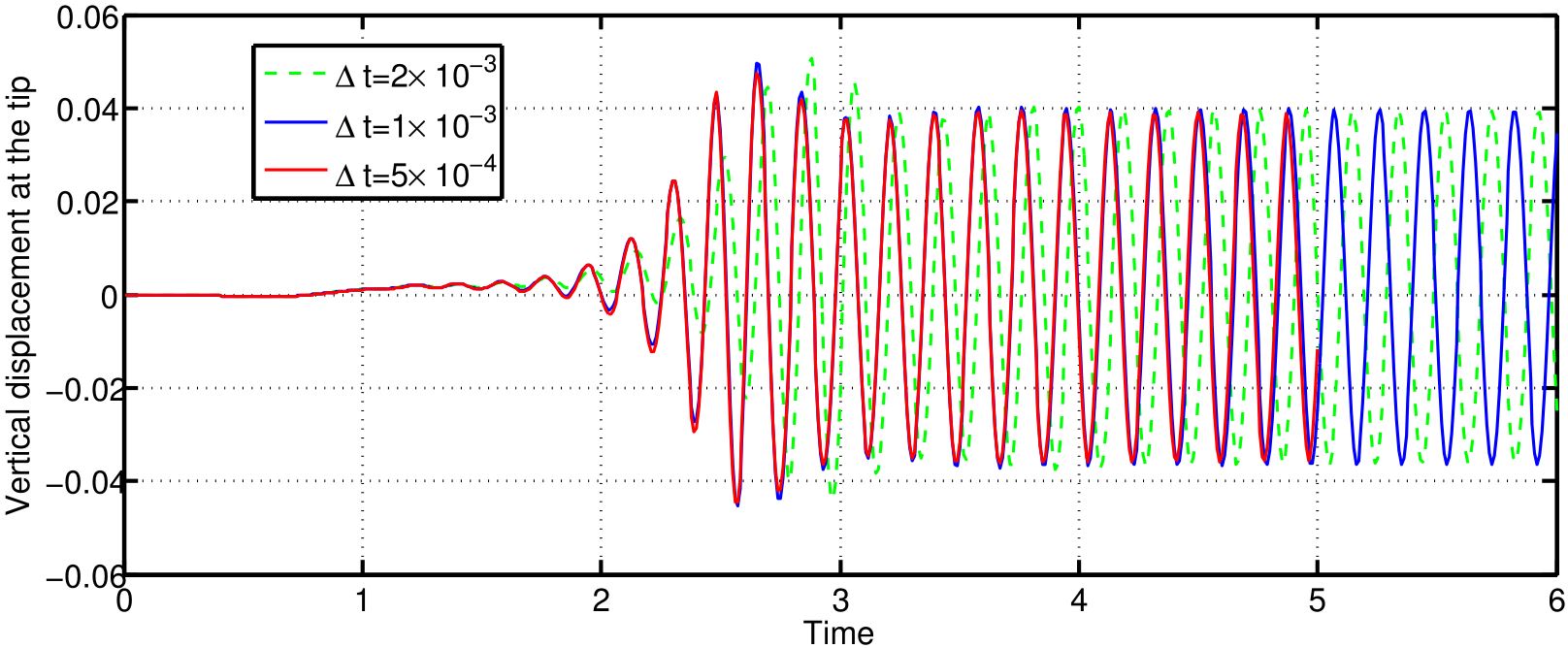}			
\captionsetup{justification=centering}
\caption {\scriptsize Vertical displacement at the flag tip as a function of time, using different time step and a medium mesh (data of the red curve is plotted up to $t=5$ for a better visualisation of the blue curve).} 
\label{flag_time}
\end{figure}

\begin{figure}[h!]
\centering  
\includegraphics[width=4.5in,angle=0]{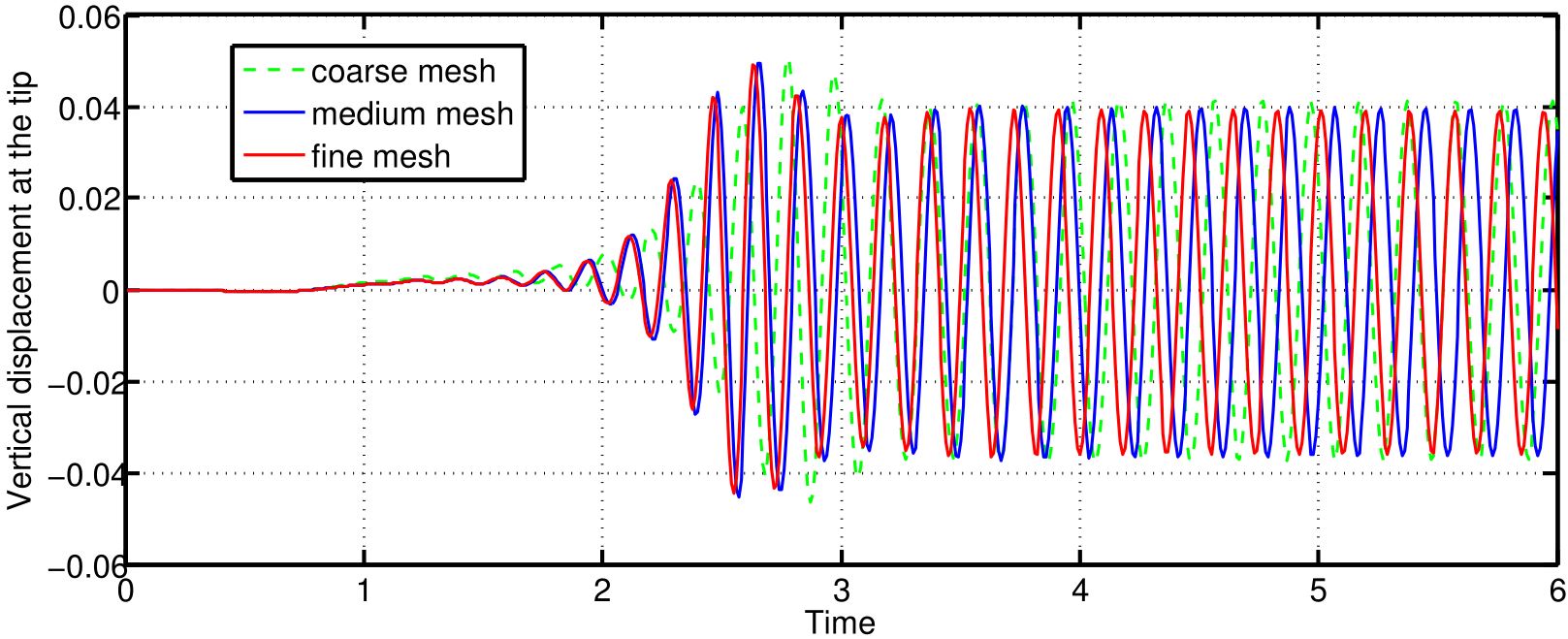}			
\captionsetup{justification=centering}
\caption {\scriptsize Vertical displacement at the flag tip as a function of time, using different mesh size and a time step size of $\Delta t=5\times 10^{-4}$.} 
\label{flag_mesh}
\end{figure}

\subsection{Falling disc}
\label{sec_fallingdisc}
In this test, we simulate a falling disc due to gravity \cite{Zhang_2007,Hesch_2014}, which needs remeshing in order to guarantee the mesh quality. However we will demonstrate that one needs much less remeshing, using the proposed ALE methods, compared to methods using pure remesh in order to fit the fluid-solid interface \cite{Hecht_2017}. This test is implemented using FreeFEM++ \cite{MR3043640}.

The computational domain is a vertical channel with a disc placed at the top of the channel as illustrated in Figure \ref{Computational_domain_falling}, where $W=4$, $H=12$, $h=2$ and $R=1$. In this test, $\rho^f=1$, $\rho^s=1.5$ $\mu^f=0.1$, $c_1=10^4$ and the gravity acceleration is $g=-9.81$. The fluid velocity is fixed to be 0 on all boundaries except the top. Notice that we choose $c_1$ sufficiently large so that the solid behaves as a rigid body. The computational domain is initially discretised by using 820 $P_2/P_1$ triangles with 1713 nodes as shown in Figure \ref{initial_mesh_faling}. We use a stable time step size of $\delta t=0.01$ and remesh every 100 times. We compare the simulation result against the empirical solution of a rigid ball falling in a viscous fluid \cite{hesch2014mortar}, for which the maximal velocity $U_m$ under gravity is given by
$$
U_m=\frac{\left(\rho^s-\rho^f\right)gR^2}{4\mu^f}\left[\ln\left(\frac{W}{2R}\right)-0.9157+1.7244\left(\frac{2R}{W}\right)^2-1.7302\left(\frac{2R}{W}\right)^4\right].
$$
In the test $U_m=1.2263$. The numerical and the empirical solutions agree well with each other when disc becomes stable as shown in Figure \ref{falling_v}. It can be understood that the disc velocity gradually decreases when it is close to the bottom of the channel. The evolution of the disc is displayed in Figure \ref{disc_evolution}. If we move the mesh by fluid velocity without the proposed ALE techniques, and remesh to guarantee the mesh quality. For this example, we find that remeshing has to be taken  at least every 7 time steps, otherwise the disc cannot successfully arrives at the bottom of the channel. We have also compared the {\bf F}-scheme and {\bf d}-scheme using this numerical test, and found that they presented very similar results although not showing in figure here.

\begin{figure}[h!]
	\begin{minipage}[h!]{0.5\linewidth}
		\centering  
		\includegraphics[width=1.2in,angle=0]{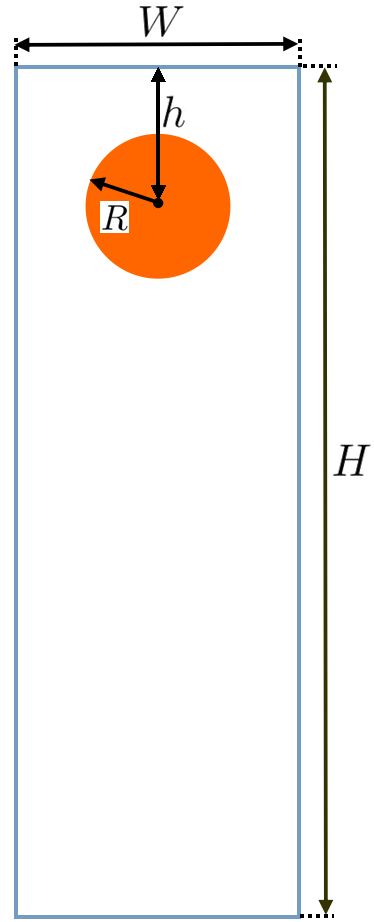}
		\captionsetup{justification=centering}
		\caption {\scriptsize Sketch of the falling disc.} 
		\label{Computational_domain_falling}	
	\end{minipage}
	\begin{minipage}[h!]{0.5\linewidth}
		\centering  
		\includegraphics[width=1.0in,angle=0]{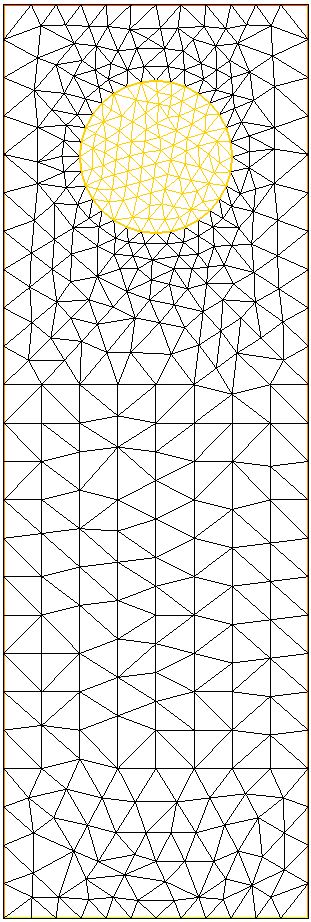}
		\captionsetup{justification=centering}
		\caption {\scriptsize Initial mesh for the falling disc.} 
		\label{initial_mesh_faling}		
	\end{minipage}     		
\end{figure}

\begin{figure}[h!]
	\centering  
	\includegraphics[width=3in,angle=0]{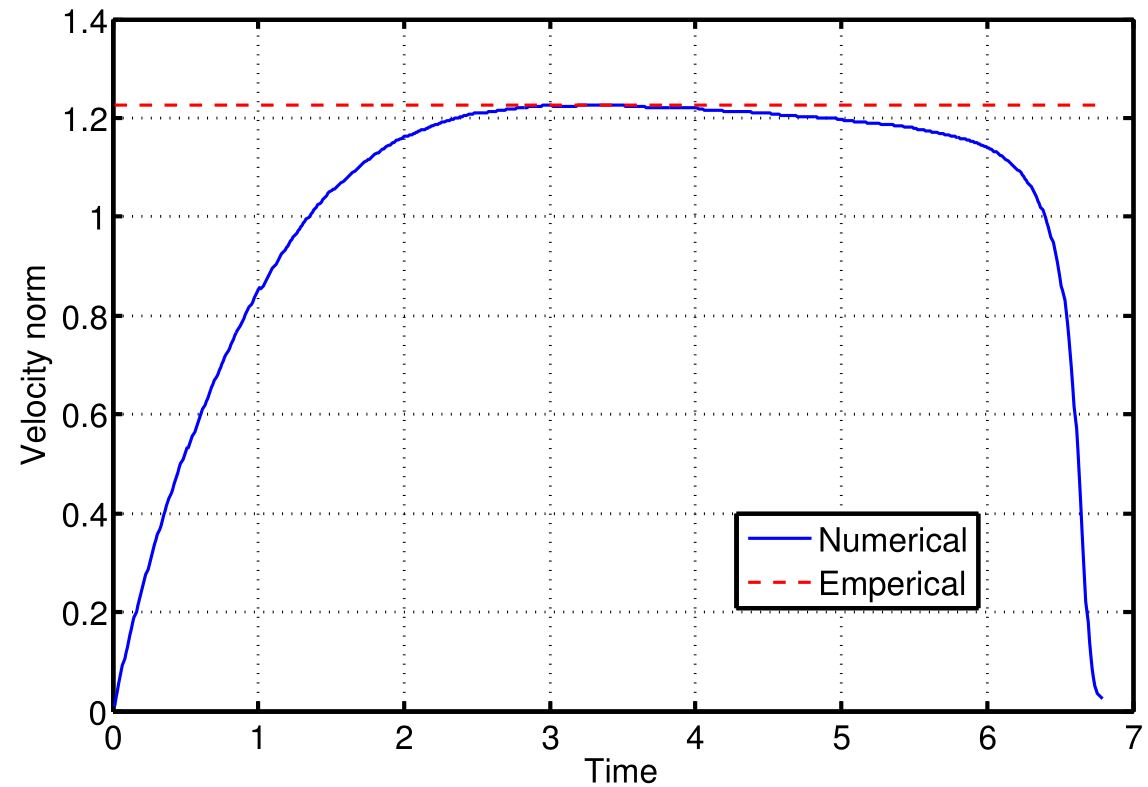}			
	\captionsetup{justification=centering}
	\caption {\scriptsize Comparison between the numerical and empirical velocity of the falling disc.} 
	\label{falling_v}
\end{figure}

\begin{figure}[h!]
\begin{minipage}[h!]{0.5\linewidth}
\centering  
\includegraphics[width=1.0in,angle=0]{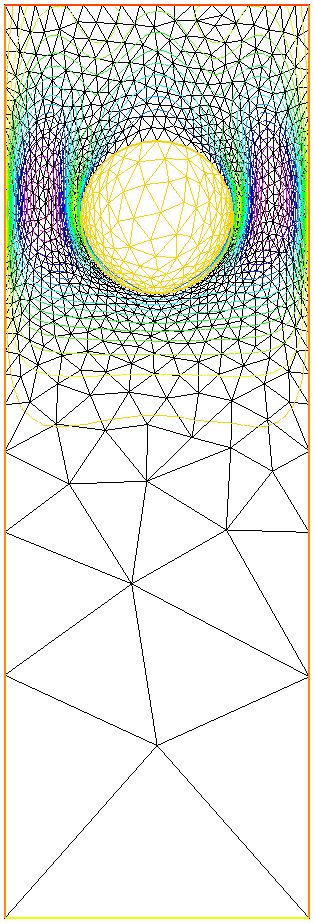}
\captionsetup{justification=centering}
\caption* {\scriptsize (a) $t=1.5$.} 	
\end{minipage}
\begin{minipage}[h!]{0.5\linewidth}
\centering  
\includegraphics[width=1.3in,angle=0]{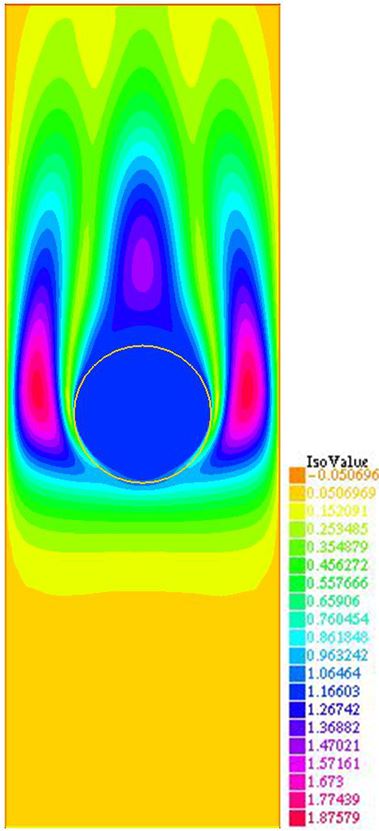}
\captionsetup{justification=centering}
\caption* {\scriptsize (b) $t=3.0$.} 
\end{minipage}   
\begin{minipage}[h!]{0.5\linewidth}
\centering  
\includegraphics[width=1.0in,angle=0]{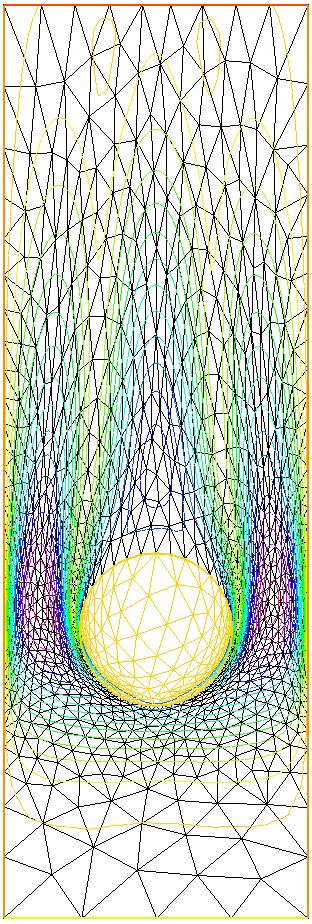}
\captionsetup{justification=centering}
\caption* {\scriptsize (c) $t=5.0$.} 
\end{minipage} 
\begin{minipage}[h!]{0.5\linewidth}
\centering  
\includegraphics[width=1.3in,angle=0]{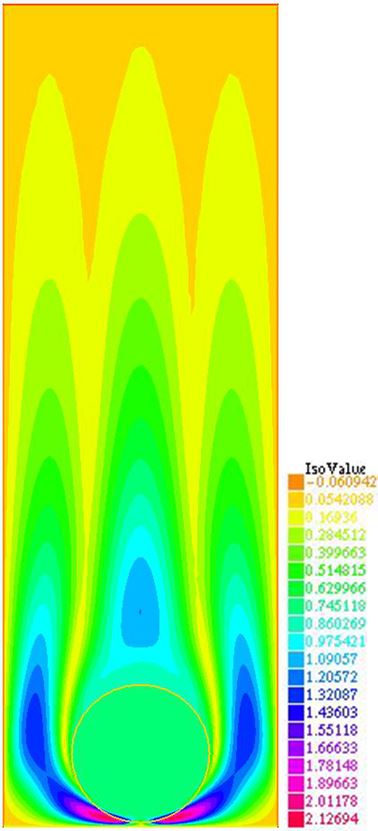}
\captionsetup{justification=centering}
\caption* {\scriptsize (d) $t=6.8$.} 
\end{minipage} 
\caption {\scriptsize Evolution of the falling disc, with colour showing the velocity norm.} 
\label{disc_evolution}	  		
\end{figure}

\section{Conclusion}
\label{sec_conclusion}
In this paper, we formulate the Fluid-Structure Interaction (FSI) system in an Arbitrary Lagrangian-Eulerian (ALE) coordinate system. The FSI system is formulated only using one-velocity field and solved in a fully-coupled manner. We prove this ALE-FSI formulation is unconditionally stable by analysing the total energy of the whole system. The stability result is achieved by expressing the problem in a conservative form, and adopting an exact quadrature rule in order to eliminate the mesh velocity. Several numerical tests are presented in order to validate the proposed scheme, including testing the energy stability, validating against a semi-analytical solution and a benchmark case, and combining with remeshing technique to simulate the case of extremely large solid displacement.

\bibliography{myref}

\begin{thebibliography}{10}
\expandafter\ifx\csname url\endcsname\relax
  \def\url#1{\texttt{#1}}\fi
\expandafter\ifx\csname urlprefix\endcsname\relax\def\urlprefix{URL }\fi
\expandafter\ifx\csname href\endcsname\relax
  \def\href#1#2{#2} \def\path#1{#1}\fi

\bibitem{Heil_2004}
M.~Heil, An efficient solver for the fully coupled solution of
  large-displacement fluid{\textendash}structure interaction problems, Computer
  Methods in Applied Mechanics and Engineering 193~(1-2) (2004) 1--23.
\newblock \href {http://dx.doi.org/10.1016/j.cma.2003.09.006}
  {\path{doi:10.1016/j.cma.2003.09.006}}.

\bibitem{Heil_2008}
M.~Heil, A.~L. Hazel, J.~Boyle, Solvers for large-displacement
  fluid{\textendash}structure interaction problems: segregated versus
  monolithic approaches, Computational Mechanics 43~(1) (2008) 91--101.
\newblock \href {http://dx.doi.org/10.1007/s00466-008-0270-6}
  {\path{doi:10.1007/s00466-008-0270-6}}.

\bibitem{Muddle_2012}
R.~L. Muddle, M.~Mihajlovi{\'{c}}, M.~Heil, An efficient preconditioner for
  monolithically-coupled large-displacement fluid{\textendash}structure
  interaction problems with pseudo-solid mesh updates, Journal of Computational
  Physics 231~(21) (2012) 7315--7334.
\newblock \href {http://dx.doi.org/10.1016/j.jcp.2012.07.001}
  {\path{doi:10.1016/j.jcp.2012.07.001}}.

\bibitem{Hecht_2017}
F.~Hecht, O.~Pironneau, An energy stable monolithic {Eulerian} fluid-structure
  finite element method, International Journal for Numerical Methods in Fluids
  85~(7) (2017) 430--446.
\newblock \href {http://dx.doi.org/10.1002/fld.4388}
  {\path{doi:10.1002/fld.4388}}.

\bibitem{Wang_2017}
Y.~Wang, P.~K. Jimack, M.~A. Walkley, A one-field monolithic fictitious domain
  method for fluid{\textendash}structure interactions, Computer Methods in
  Applied Mechanics and Engineering 317 (2017) 1146--1168.
\newblock \href {http://dx.doi.org/10.1016/j.cma.2017.01.023}
  {\path{doi:10.1016/j.cma.2017.01.023}}.

\bibitem{Wang_2019}
Y.~Wang, P.~K. Jimack, M.~A. Walkley, Energy analysis for the one-field
  fictitious domain method for fluid-structure interactions, Applied Numerical
  Mathematics 140 (2019) 165--182.
\newblock \href {http://dx.doi.org/10.1016/j.apnum.2019.02.003}
  {\path{doi:10.1016/j.apnum.2019.02.003}}.

\bibitem{hubner2004monolithic}
B.~H{\"u}bner, E.~Walhorn, D.~Dinkler, A monolithic approach to
  fluid--structure interaction using space--time finite elements, Computer
  Methods in Applied Mechanics and Engineering 193~(23-26) (2004) 2087--2104.

\bibitem{nobile1999stability}
F.~Nobile, L.~Formaggia, A stability analysis for the arbitrary {L}agrangian
  {E}ulerian formulation with finite elements, East-West Journal of Numerical
  Mathematics 7~(ARTICLE) (1999) 105--132.

\bibitem{formaggia2004stability}
L.~Formaggia, F.~Nobile, Stability analysis of second-order time accurate
  schemes for {ALE--FEM}, Computer Methods in Applied Mechanics and Engineering
  193~(39-41) (2004) 4097--4116.

\bibitem{bonito2013time}
A.~Bonito, I.~Kyza, R.~H. Nochetto, Time-discrete higher-order {ALE}
  formulations: stability, SIAM Journal on Numerical Analysis 51~(1) (2013)
  577--604.

\bibitem{Boffi_2016}
D.~Boffi, L.~Gastaldi, A fictitious domain approach with {L}agrange multiplier
  for fluid-structure interactions, Numerische Mathematik 135~(3) (2016)
  711--732.
\newblock \href {http://dx.doi.org/10.1007/s00211-016-0814-1}
  {\path{doi:10.1007/s00211-016-0814-1}}.

\bibitem{Boffi_2015}
D.~Boffi, N.~Cavallini, L.~Gastaldi, The finite element immersed boundary
  method with distributed {L}agrange multiplier, {SIAM} Journal on Numerical
  Analysis 53~(6) (2015) 2584--2604.
\newblock \href {http://dx.doi.org/10.1137/140978399}
  {\path{doi:10.1137/140978399}}.

\bibitem{Pironneau_2016}
O.~Pironneau, Numerical study of a monolithic fluid{\textendash}structure
  formulation, in: Variational Analysis and Aerospace Engineering, Springer
  International Publishing, 2016, pp. 401--420.
\newblock \href {http://dx.doi.org/10.1007/978-3-319-45680-5_15}
  {\path{doi:10.1007/978-3-319-45680-5_15}}.

\bibitem{chiang2017numerical}
C.-Y. Chiang, O.~Pironneau, T.~Sheu, M.~Thiriet, Numerical study of a 3d
  {E}ulerian monolithic formulation for incompressible fluid-structures
  systems, Fluids 2~(2) (2017) 34.

\bibitem{belytschko2013nonlinear}
T.~Belytschko, W.~K. Liu, B.~Moran, K.~Elkhodary, Nonlinear finite elements for
  continua and structures., John Wiley \& Sons, 2013.

\bibitem{magnus2019matrix}
J.~R. Magnus, H.~Neudecker, Matrix differential calculus with applications in
  statistics and econometrics, John Wiley \& Sons, 2019.

\bibitem{brenner2007mathematical}
S.~Brenner, R.~Scott, The mathematical theory of finite element methods,
  Vol.~15, Springer Science \& Business Media, 2007.

\bibitem{mitrovic1997fundamentals}
D.~Mitrovic, D.~Zubrinic, Fundamentals of applied functional analysis, Vol.~91,
  CRC Press, 1997.

\bibitem{Hesch_2014}
C.~Hesch, A.~Gil, A.~A. Carre{\~{n}}o, J.~Bonet, P.~Betsch, A mortar approach
  for fluid{\textendash}structure interaction problems: Immersed strategies for
  deformable and rigid bodies, Computer Methods in Applied Mechanics and
  Engineering 278 (2014) 853--882.
\newblock \href {http://dx.doi.org/10.1016/j.cma.2014.06.004}
  {\path{doi:10.1016/j.cma.2014.06.004}}.

\bibitem{turek2006proposal}
S.~Turek, J.~Hron, Proposal for numerical benchmarking of fluid--structure
  interaction between an elastic object and laminar incompressible flow, in:
  Fluid-Structure Interaction, Springer, 2006, pp. 371--385.

\bibitem{Zhang_2007}
L.~Zhang, M.~Gay, Immersed finite element method for fluid-structure
  interactions, Journal of Fluids and Structures 23~(6) (2007) 839--857.
\newblock \href {http://dx.doi.org/10.1016/j.jfluidstructs.2007.01.001}
  {\path{doi:10.1016/j.jfluidstructs.2007.01.001}}.

\bibitem{MR3043640}
F.~Hecht, \href{https://freefem.org/}{New development in {F}ree{F}em++},
  Journal of Numerical Mathematics 20~(3-4) (2012) 251--265.
\newline\urlprefix\url{https://freefem.org/}

\bibitem{hesch2014mortar}
C.~Hesch, A.~Gil, A.~A. Carreno, J.~Bonet, P.~Betsch, A mortar approach for
  fluid--structure interaction problems: Immersed strategies for deformable and
  rigid bodies, Computer Methods in Applied Mechanics and Engineering 278
  (2014) 853--882.

\end{thebibliography}
\end{document}